\newtheorem{theorem}{Theorem}
\newtheorem{definition}{Definition}
\newtheorem{lemma}{Lemma}
\newtheorem{corollary}{Corollary}
\newtheorem{remark}{Remark}
\newtheorem{example}{Example}
\newcommand{\E}{\mathbb{E}}
\newcommand{\Po}{\mathsf{Poisson}}
\title{Capacity of Diffusion based Molecular Communication Networks over LTI-Poisson Channels}
\author{\IEEEauthorblockN{Hamidreza Arjmandi\IEEEauthorrefmark{1},
Gholamali Aminian\IEEEauthorrefmark{1}, Amin Gohari\IEEEauthorrefmark{1}, Masoumeh Nasiri Kenari\IEEEauthorrefmark{1} and Urbashi Mitra\IEEEauthorrefmark{2}}\\
\IEEEauthorblockA{\IEEEauthorrefmark{1} Sharif University of Technology              \IEEEauthorrefmark{2}University of Southern California (USC)}}
\begin{document}
\maketitle
\begin{abstract}
In this paper, the capacity of a diffusion based molecular communication network under the model of a Linear Time Invarient-Poisson (LTI-Poisson) channel is studied. Introduced  in the context of molecular communication, the LTI-Poisson model is a natural extension of the conventional memoryless Poisson channel to include memory. Exploiting prior art on linear ISI channels, a computable finite-letter characterization of the capacity of single-hop  LTI-Poisson networks is provided. Then, the problem of finding more explicit bounds on the capacity is  examined, where lower and upper bounds for the point to point case are provided. Furthermore, an approach for bounding mutual information in the low SNR regime using the symmetrized KL divergence is introduced and its applicability to Poisson channels is shown. To best of our knowledge, the first non-trivial upper bound on the capacity of Poisson channel with a maximum transmission constraint in the low SNR regime is found. Numerical results show that the proposed upper bound is of the same order as the capacity in the low SNR regime.
\end{abstract}

\section{Introduction}
Design of effective communication schemes via nano-machines is motivated by successes in the development of these units. Many applications in biomedical, industrial and environmentally engineered systems are envisioned for interconnected nano-networks \cite{Akyl2011}, \cite{Nakano2012}. Inspired by biological systems, in this work, we consider a \emph{molecular communication} (MC) system via diffusion \cite{Pierboon2010}, \cite{Eckford2007}. In a diffusion based MC system, information is encoded into the concentration, type, or release time of the molecules diffused into the medium by a nano-transmitter. Molecules travel from the transmitter(s) to receiver(s) via a Brownian motion mechanism, that can be with or without drift (see \cite{Book}).

To understand the fundamental transmission capacity of diffusion based MC systems, one has to deal with the unique features of the diffusion channel, including the intersymbol and co-channel interferences which are due to the gradual diffusion process and the fact that molecules released from a transmitter can reach the receiver after a long delay. Interference of diffusion based MC has been studied in \cite{Mahfuz2011,B1,B2,Pierobon2012,Atakan2012,Pierobon2014} for different modulation schemes. Further, several coding schemes are proposed to mitigate ISI in diffusion based MC  \cite{Leeson2012, Ko2012, Shih2012}. 

One of the first papers to address the need for a mathematical analysis of the capacity of MC systems is \cite{Alfano2006}.  In \cite{CapacityN2,Akan2008,Akan2007,Ein2011,Liu2013,Kuran2012,Airfler2011,CapacityN1,CapacityN15}, the authors consider special transmission strategies (binary or quaternary) for diffusion based MC and analyze the achievable information rates.  Some of these works also consider the interference of the last transmitted symbol. In \cite{Eckford2008,Khormuji2011,Eckford2012J,Eckford2012C} the authors study the achievable information rates in MC for a timing channel in which information is encoded in the time of release of the molecules. In \cite{Pierobon2013}, the capacity of diffusion based MC is studied under a large scale channel model. The paper \cite{Hsieh2013} proposes a general model for the diffusion based MC channel and mathematical formalisms for studying capacity, but does not provide explicit capacity formulas.

A point to point diffusion channel (with or without drift in a uniform or non-uniform medium) can be modeled as a state dependent channel, with the state modeling the number or density of molecules across the environment. One can observe that this diffusion channel is \emph{indecomposable} \cite[p. 105]{Gallager} as the initial state diffuses away over time. Therefore, one can write out the formula to characterize capacity in a way that is computable, although the results of \cite{Gallager} only provide a \emph{finite}-letter form. This is a simple observation, but to best of our knowledge, the literature on molecular communication does not point this out. 

The capacity of networks with memory has been the subject of numerous studies in information theory. These channels are generally modeled via finite state or general channel models. In the class of finite state channels, linear ISI channels have received particular attention, e.g. see \cite{Massey, Verdu}. There has been relatively less work on multi-terminal networks with memory. Limiting (uncomputable) characterizations of single-hop networks have been provided in \cite{Relayless}. The capacity region of certain networks with a MAC architecture have been found by Verd\'{u} in \cite{Verdu}. Dabora and Goldsmith find the capacity of degraded finite-state broadcast channel \cite{Dabora} where they face the  superposition coding aspect of the region that does not exist in the point to point and the MAC counterparts.

In this paper, we prove several capacity results for the diffusion model of \cite{Arjmandi2013, Mosayebi2014} that models a stationary point to point diffusion channel in an arbitrary environment. We call this model the \emph{LTI-Poisson} model for reasons that will become clear later. The same model can be also used for the bacterial filament problem in \cite{Nicola2014} for some special cases.  The LTI-Poisson model can be understood as a generalization of the classical memoryless Poisson channel. The Poisson channel has applications in optical communications and has been the subject of many studies.  
Therefore, the LTI-Poisson model relates to two bodies of literature in information theory: networks with memory and memoryless Poisson channels. A common point in both literatures is an attempt to find easy-to-compute expressions for the capacity (e.g. see \cite{Lapidoth2011, LapidothMoser2009}). In particular, both literatures exploit Topsoe's upper bound on mutual information  \cite{Topsoe, AllertonUpper} (see Remark \ref{remarkTopsoe} for details of this inequality). One of the goals of this paper is to develop a new approach to bound mutual information from above, in the low SNR regime.  

 In this paper, after generalizing the LTI-Poisson model to a single-hop multi-terminal setting, we make the following contributions:
\begin{itemize}
\item We provide a computable characterization of the capacity region (Theorem \ref{thm1}) under the LTI-Poisson model. This result follows from classical ideas used in studying the capacity of linear ISI channels \cite{Massey, Verdu, Goldsmith}, as we show that the model of  \cite{Arjmandi2013, Mosayebi2014}  falls into the same general category as linear ISI channels.
\item Next, we consider the special case of a point to point channel. Using Theorem \ref{thm1}, we develop some lower and upper bounds on the capacity (Theorem \ref{thm2}). Further we provide some useful lemmas and numerical simulations. Numerical results provide an estimate of how fast the multi-letter characterization approaches the capacity region for a Poisson channel, as we vary the channel parameters.
\item Finally, we develop a new (easy to compute) upper bound on mutual information using symmetrized Kullback-Leibler divergence (KL divergence). Based on numerical evidence, we believe that this upper bound works well for channels with small capacity (which can occur in MC systems). 
\end{itemize}

 Throughout this paper all the logarithms are in base $e$. 

This paper is organized as follows: in Section \ref{SystemModel}, we review the LTI-Poisson model of  \cite{Arjmandi2013} and extend it to a multi-terminal setting. Here, we define a class of memory limited networks that generalizes both the linear ISI channel and the LTI-Poisson model. Section 
\ref{capacity} includes our general capacity results. Section \ref{symm} contains a new upper bound technique for mutual information that can be used to find easy-to-compute approximations of a capacity region. Finally, Section \ref{LTIpoisson} applies our results on the memory limited networks to the LTI-Poisson model. Key proofs are given in the appendices.

\section{System Model}
\label{SystemModel}

We review the appropriate molecular Poisson channel model of \cite{Arjmandi2013} for point to point communication. Time is assumed to be divided into equal time-slots, during which a transmission and a reception occurs. The transmitter opens the outlet of a molecule storage for a short period of time at the beginning of each time slot. The input $X_i$, at time slot $i$, controls the size of the outlet of a molecule storage, from which molecules can flee. Should the transmitter choose $X_i$ as the size of its outlet, the number of molecules that are released into the environment will follow a Poisson distribution with parameter $X_i$. Each of these molecules hit the receiver in next $k$-th time slot with probability $p_k$, $k=0,1,2,\cdots$, upon which the receiver absorbs the molecule. From the thinning property of the Poisson distribution (random selection of Poisson points,\cite{thinning}), we know that the number of molecules due to transmission $X_i$ that are received in the subsequent time slots, follow a Poisson distribution. More precisely, assuming that $X_i$ is the input to a channel at time slot $i$ for $i\geq 0$, the output at time instance $i$ is 
\begin{align}Y_i\sim\Po(\lambda_0+\sum_{j=0}^ip_{j}X_{i-j}),\qquad\forall i\geq 0\label{eqn1nn}\end{align}
where $\lambda_0$ is the background noise and $\textbf{p}=(p_0, p_1, p_2, \cdots)$ is a sequence of non-negative ``hitting probabilities" satisfying $\sum_{i}p_i<\infty$. Equation \eqref{eqn1nn} only considers the marginal pmf of the output $Y_i$ given the channel inputs. Indeed conditioned on inputs $X_{1:n}$, the outputs are mutually independent and hence \eqref{eqn1nn} is a full description of the channel statistics. In other words,
\begin{align}p(y_{1:n}|x_{1:n})&=\prod_{i=1}^np(y_i|x_{1:i})\\
&=\prod_{i=1}^n
e^{-(\lambda_0+\sum_{j=0}^ix_{i-j}p_{j})}\frac{(\lambda_0+\sum_{j=0}^ix_{i-j}p_{j})^{y_i}}{y_i!}.
\end{align}
This equation follows from the thinning property of the Poisson distribution, as given a sequence of transmissions, the molecules received from each transmission are independent across various time slots.

The sequence of hitting probabilities fully captures the impact of  the geometry of the communication medium, its possible non-uniformity, the distance between transmitter and receiver,  and drift. For the special case of $p_0=1$ and $p_i=0$ for $i\geq 1$, we get $Y_i\sim\Po(\lambda_0+X_i),$ which is the well-known, discrete time, Poisson channel. As with the classical  Poisson channel, we assume the following constraints on the input codewords of length $n$: $X_i\geq 0$, average input constraint $\sum_{i=1}^nX_i/n\leq \alpha$ and possibly a constraint on the maximum value of $X_i$, $X_i\leq \mathsf{A}$. 

The above model is justified in \cite{Arjmandi2013} and \cite{Mosayebi2014}. We provide a simple observation: the summation $\sum_{j=0}^ip_{j}X_{i-j}$ is the convolution of the sequence $(X_0, X_1, \cdots)$ with the sequence $\textbf{p}=(p_0, p_1, p_2, \cdots)$. Therefore $Y_i$ can be understood as a cascade of an LTI system (with impulse response $\textbf{p}$) and a memoryless Poisson channel.  This is depicted in Fig. \ref{Figure1}. For this reason, we call this channel an \emph{LTI-Poisson channel}.

\begin{figure*}
\begin{center}
\includegraphics[scale=0.3,angle=0]{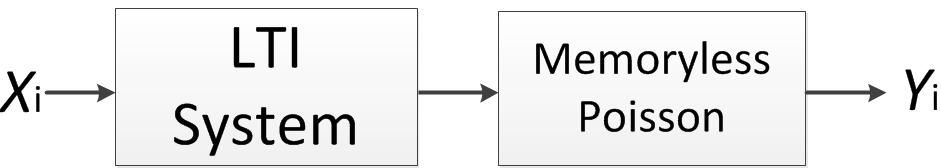}
\end{center}
\caption{An LTI-Poisson Channel: a model for the diffusion channel}
\label{Figure1}
\end{figure*}

More generally, consider a network where each node is either a transmitter or a receiver. In particular, assume that we have $s$ transmitters and $d$ receivers. For any transmitter node $i$ and receiver node $j$, there is a message of rate $R_{ij}$ to be transmitted. These types of networks include broadcast, MAC and interference channel architectures, but not the relay channel for instance. See Part II of \cite{ElGamalKim} for classical results on single-hop networks.

The set of transmissions by user $j$ in times $1,2, \cdots, n$ will be denoted by $x_{j1}, x_{j2}, \cdots, x_{jn}$. Let 
$\textbf{x}_i=(x_{1i}, \cdots, x_{si})$ be the set of inputs for the $s$ transmitters at time $i$. Similarly,  $\textbf{y}_i=(y_{1i}, \cdots, y_{di})$  is the set of outputs of the $d$ receivers at time $i$. We can straightforwardly extend the derivations of  \cite{Arjmandi2013} to determine the relationship between the input and output and show that this system can be modeled via an LTI-Poisson network given in Fig. \ref{Figure1.5}. To see this, let $p_{l,j,k}$ for some $1\leq l\leq s$, $1\leq j\leq d$ and $k\geq 0$ to be the probability that a molecule released from transmitter node $l$ hits receiver node $j$ in the next $k$-th time instance.  Since the signal received by receiver node $j$ at time $i$ is due to the transmissions from all transmitters by the time slot $i$, using similar steps as in the point to point case, the distribution of receiver node $j$ at time $i$ is
\begin{align}Y_{ji}\sim\Po(\lambda_0+\sum_{l=1}^s\sum_{u=0}^iX_{l,i-u}p_{l,j,u}),\qquad\forall i\geq 0\label{eqn1nnMM}\end{align}
where we are using the fact that $\Po(X_{l,i-u}p_{l,j,u})$ is the contribution from the transmitter $l$ at time $i-u$ that has reached receiver $j$ with a time delay of $u$ time slots. Equation \eqref{eqn1nnMM} is again a convolution and the triple $p_{l,j,u}$ specifies the impulse response of the LTI network given in Fig. \ref{Figure1.5}.

\begin{figure*}
\begin{center}
\includegraphics[scale=0.4,angle=0]{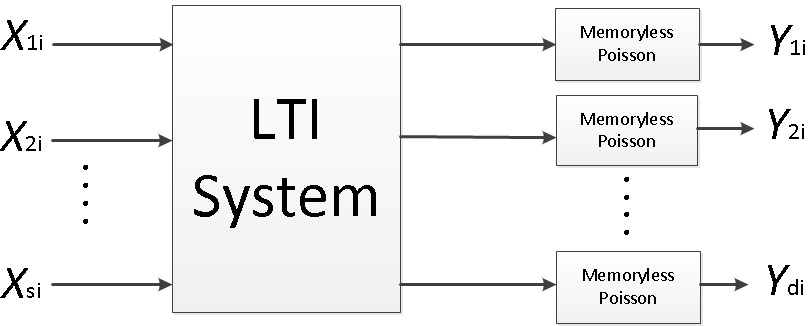}
\end{center}
\caption{An LTI-Poisson Model for a single-hop diffusion network}
\label{Figure1.5}
\end{figure*}

\emph{Memory-limited networks:} Assume that for some large enough $k$, molecules injected into the environment at times before $i-k-1$ are completely diffused in the environment and so their effect on the current outputs of the channel are negligible. This implies that the hitting probability $p_{l,j,u}=0$ for any $u>k$. In this case, the outputs at time $i$ depend only on the past $k$ inputs. More generally, we define a class of memory limited networks as follows:

\begin{definition}
Consider a single-hop network with inputs $\textbf{x}_{1:n}$ at time instances $1$ to $n$. We call this network a \emph{memory limited network} (MLN) of order $k$ if the output random variables at times $k+1$ to $n$ satisfy the following:
\begin{align}p(\textbf{y}_{k+1:n}|\textbf{x}_{1:n})=\prod_{i=k+1}^np(\textbf{y}_i|\textbf{x}_{i:i-k}).\label{eqn:FFSN}
\end{align}
\end{definition}

 This definition was used by  Verd\'{u}  in \cite{Verdu}  to model a linear ISI channel for the special case of multi-access channels. Indeed, the restriction given in Eq. \eqref{eqn:FFSN} is broad enough to include any memory-limited ISI finite state channel (linear or non-linear) when the state at time $i$ is determined by the inputs of its $k$ previous time slots.

We prove several of our theorems for MLN networks, which includes LTI-Poisson networks.


\section{Capacity of MLN networks}
\label{capacity}

Let $\mathcal{C}$ be the capacity region, including the set of all asymptotically achievable rates $(R_{ij})$ between all transmitter-receiver node pairs $(i,j)$. Our definition of the capacity region is the standard one with vanishing probability of error as the blocklength goes to infinity.  More specifically, an $(n,\epsilon)$ code for a network with $s$ transmitters and $d$ receivers consists of a set of messages $M_{ij}$ of length $n(R_{ij}-\epsilon)$ from transmitter $i$ to receiver $j$ ($1\leq i\leq s$, $1\leq j\leq d$), encoders $\mathcal{E}_i$ (for $1\leq i\leq s$) mapping messages $(M_{ij})_{1\leq j\leq d}$ to input $\textbf{X}_i=(X_{i1}, X_{i2}, \cdots, X_{in})$, and decoders $\mathcal{D}_j$ (for $1\leq j\leq d$) mapping the output $\textbf{Y}_j=(Y_{j1}, Y_{j2}, \cdots, Y_{jn})$ to messages $(\hat{M}_{ij})_{1\leq i\leq s}$, such that the probability that $\hat{M}_{ij}\neq M_{ij}$ for some $(i,j)$ is less than or equal to $\epsilon$.

In this section, we show that a computable characterization of the capacity MLN networks can be found in terms of  the corresponding memoryless counterparts, e.g. the capacity of a broadcast MLN can be expressed in terms of the capacity of a memoryless broadcast channel.

We begin with the following definition of a \emph{block memoryless} version of a channel with memory:

\begin{definition}\label{def1absdfr} Given an MLN channel defined by equation \eqref{eqn:FFSN} and a natural number $r$, consider 
 a block \underline{memoryless} channel with network description \begin{align}p(\textbf{y}_{k+1:k+r}|\textbf{x}_{1:k+r})=\prod_{i=k+1}^{k+r}
p(\textbf{y}_{i}|\textbf{x}_{i:i-k}),\label{eqn:newnewthm1}\end{align}
where $p(\textbf{y}_{i}|\textbf{x}_{i:i-k})$ on the right hand side is the description of the original network. In other words, given an MLN with $s$ transmitters and $d$ receivers, we create a virtual single-hop block memoryless network with the same transmitter and receivers. In each single use of this network, the $j$-th transmitter chooses $k+r$ symbols $x_{j, 1:k+r}$; in other words, the  $k+r$ symbols $x_{j, 1:k+r}$ combined together form one channel input. Once all the transmitters have commited their inputs, collectively shown by $\textbf{x}_{i:i-k}$, receiver $j$ gets $r$ output symbols $y_{j, k+1:k+r}$, which are collectively shown by $\textbf{y}_{k+1:k+r}$. 
\end{definition}

Before stating a computable characterization of the capacity region, given a region $\mathcal{R}$ and a real number $c$ we define $c\mathcal{R}$ to be the pointwise multiplication of vectors in $\mathcal{R}$ by the scalar $c$.

Then, we have the following theorem:

\begin{theorem}\label{thm1}
For any arbitrary $r\in\mathbb{N}$, the capacity $C$ of a MLN satisfies
\begin{align}\frac{r}{k+r}\mathcal{C}_r\subseteq \mathcal{C}\subseteq \mathcal{C}_r,\label{eqn:thm1}\end{align}
where
$\mathcal{C}_r$ is $1/r$ times the capacity region of the block memoryless  system of size $r$ as described in Definiton \ref{def1absdfr}.
\end{theorem}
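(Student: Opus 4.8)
The plan is to derive both inclusions from a single structural property of the MLN factorization \eqref{eqn:FFSN}, and then to account for the rates differently in the two directions. The property is this: conditioned on the entire input $\textbf{x}_{1:n}$, the outputs are mutually independent across time, and any window of $r$ consecutive output symbols $\textbf{y}_{i+1:i+r}$ depends on the inputs only through the $k+r$ consecutive symbols $\textbf{x}_{i+1-k:i+r}$. Consequently, if time is cut into consecutive blocks, each length-$r$ output window is produced by exactly one use of the block memoryless channel of Definition~\ref{def1absdfr}, driven by the $k+r$ inputs aligned with that window. This is the only feature of the channel the proof will use. Throughout I write $r\mathcal{C}_r$ for the capacity region of the block memoryless channel (using the scalar-multiplication convention), so that the two target inclusions read $\frac{1}{k+r}(r\mathcal{C}_r)\subseteq\mathcal{C}$ and $r\mathcal{C}\subseteq r\mathcal{C}_r$.

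For achievability, $\frac{r}{k+r}\mathcal{C}_r\subseteq\mathcal{C}$, I would fix a rate tuple in $r\mathcal{C}_r$ and take a good block memoryless code using $m$ channel uses with vanishing error. I will run it on the MLN by concatenating the $m$ block codewords, each of length $k+r$, into a single input of blocklength $n=m(k+r)$; each receiver then discards the first $k$ outputs of every block and feeds the last $r$ to the block decoder. By the structural property, the last $r$ outputs of block $b$ depend only on block $b$'s own $k+r$ inputs, and since the block codewords are drawn independently across blocks, the $m$ input-block/output-window pairs are i.i.d. realizations of the block memoryless channel. Hence the induced error probability equals that of the block code and vanishes, while the message rate per MLN symbol is the block memoryless rate divided by $k+r$, giving exactly the factor $\frac{1}{k+r}$.

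For the converse, $\mathcal{C}\subseteq\mathcal{C}_r$, I would fix $(R_{ij})\in\mathcal{C}$, take a sequence of good MLN codes of blocklength $n$, and assume (at a vanishing cost in rate) that $n=k+mr$. I will reinterpret each code as a block memoryless code over $m$ uses: partition the output times $k+1,\dots,n$ into $m$ windows of length $r$, and declare the $k+r$ inputs of the $b$-th block memoryless use to be the $r$ inputs aligned with window $b$ together with the $k$ inputs immediately preceding it (which overlap the previous block). Because, given all inputs, the MLN outputs factorize across time by \eqref{eqn:FFSN}, the joint law induced on these $m$ output windows is precisely that of $m$ independent uses of the block memoryless channel with the stated inputs, so the block decoder can simulate the MLN decoder and inherit its vanishing error. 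This is a legitimate block memoryless code carrying the same messages, so each message now corresponds to $r$ times the per-symbol rate; letting $n\to\infty$ shows $r(R_{ij})\in r\mathcal{C}_r$.

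The main obstacle, and the reason the two bounds differ by the factor $\frac{r}{k+r}$ rather than coinciding, is exactly the accounting of the $k$ overlap symbols. In the converse I am free to let the $k$ ``state'' inputs of one block coincide with payload inputs of the neighboring block, since I only need the induced channel law to match and the MLN decoder to be simulable; no independence across block uses is required. In achievability I cannot do this, because invoking the block code requires $m$ genuinely independent uses with freely chosen $k+r$-symbol inputs, so the blocks must be disjoint and the $k$ overhead symbols per block are spent rather than reused; this is the source of the $\frac{r}{k+r}$ loss. The remaining points are routine: the first $k$ output symbols (times $1,\dots,k$, which lie outside the factorization of \eqref{eqn:FFSN}) and the $O(1)$ slack from forcing $n$ into the form $k+mr$ each affect only a vanishing fraction of symbols and therefore leave the limiting rates unchanged.
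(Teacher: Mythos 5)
Your proposal is correct and follows essentially the same route as the paper: the achievability is the paper's ``depreciation via deletion'' (blocks of length $k+r$, discard the first $k$ outputs of each block, yielding independent uses of the block memoryless channel and the $\tfrac{r}{k+r}$ loss), and your converse is the paper's ``enhancement by insertion'' with the fictitious state-reset inputs instantiated as the last $k$ inputs of the preceding block --- exactly the identification the paper writes down --- merely rephrased as a direct reinterpretation of the MLN code as a block memoryless code. The boundary effects you defer (the first $k$ outputs and rounding $n$ to $k+mr$) are treated with the same level of informality in the paper itself.
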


See Appendix \ref{AppendixProofThm1} for a proof.

This is a complete and \emph{computable} characterization of the capacity region of an MLN channel in terms of  a corresponding memoryless channel. Given a certain accuracy level, we can find a suitable $r$ such that the lower and upper bound become close to each other within the given accuracy level.\footnote{As Cheng and  Verd\'{u}  note it is ``a not uncommon misconception is to dismiss limiting expressions for capacity as uncomputable" \cite{ChengVerdu} as there are examples of computable finite letter characterizations of capacity (including the one we propose in this paper). }

\begin{corollary}\label{cor1}
Consider the special case of a point to point channel. In this case,
\begin{align}\frac{r}{k+r}C_r\leq C\leq C_r,\label{eqn:thm1cor}\end{align}
where 
\begin{align}C_r=\frac{1}{r}\max_{p(x_{1:k+1})}I(X_{1:k+r}; Y_{k+1:k+r}).\label{eqn:r1}\end{align}
\end{corollary}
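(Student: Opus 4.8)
The plan is to obtain this as a direct specialization of Theorem \ref{thm1} to the single-transmitter, single-receiver case $s=d=1$; the only genuinely new content is to read off an explicit scalar formula from the (generally multidimensional) region statement of the theorem. First I would observe that for a point to point channel the capacity region $\mathcal{C}$ is the one-dimensional interval $[0,C]$, and likewise $\mathcal{C}_r=[0,C_r]$ for the scalar $C_r$ to be identified below. Under this identification the set inclusions $\frac{r}{k+r}\mathcal{C}_r\subseteq\mathcal{C}\subseteq\mathcal{C}_r$ of \eqref{eqn:thm1} collapse to the scalar sandwich $\frac{r}{k+r}C_r\leq C\leq C_r$ asserted in \eqref{eqn:thm1cor}, since scaling and inclusion of nested intervals anchored at $0$ are governed entirely by their right endpoints: $\frac{r}{k+r}[0,C_r]=[0,\frac{r}{k+r}C_r]\subseteq[0,C]$ iff $\frac{r}{k+r}C_r\leq C$, and $[0,C]\subseteq[0,C_r]$ iff $C\leq C_r$.

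It remains to compute $C_r$. By the definition of $\mathcal{C}_r$ it equals $1/r$ times the capacity region of the block memoryless channel of Definition \ref{def1absdfr}. With $s=d=1$ this block memoryless channel is an ordinary memoryless channel whose single input letter is the vector $X_{1:k+r}$, whose single output letter is the vector $Y_{k+1:k+r}$, and whose transition law is $p(y_{k+1:k+r}|x_{1:k+r})=\prod_{i=k+1}^{k+r}p(y_i|x_{i:i-k})$ as in \eqref{eqn:newnewthm1}. Invoking Shannon's single-letter channel coding theorem (in its version with an input cost constraint) for this memoryless channel, its capacity is $\max_{p(x_{1:k+r})}I(X_{1:k+r};Y_{k+1:k+r})$, the maximum being over input distributions obeying the induced cost constraint. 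Dividing by $r$ yields precisely the expression \eqref{eqn:r1}.

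The step that requires care is the translation of the input constraints. One has to verify that the average constraint $\frac1n\sum_i X_i\leq\alpha$ and (when present) the peak constraint $X_i\leq\mathsf{A}$ on the original codeword induce exactly the cost constraint over which the block memoryless capacity is maximized, so that the same feasible set of distributions $p(x_{1:k+r})$ appears in both the lower and the upper bound; concretely the average constraint should become a per-block expected cost constraint $\E\big[\frac{1}{k+r}\sum_{i=1}^{k+r}X_i\big]\leq\alpha$, while nonnegativity and the peak bound are inherited letter-wise. I expect this constraint bookkeeping to be the only subtle point: once it is settled, the corollary is simply the scalar reading of Theorem \ref{thm1} together with the standard single-letter formula for the capacity of the induced memoryless channel, with no further estimation required.
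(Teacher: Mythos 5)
Your proposal is correct and matches the paper's intent: the paper offers no separate proof of Corollary \ref{cor1}, treating it as the immediate scalar specialization of Theorem \ref{thm1} with $C_r$ given by the single-letter capacity of the block memoryless channel, exactly as you argue. Your maximization over $p(x_{1:k+r})$ is the right reading (the subscript $p(x_{1:k+1})$ in the corollary's statement is a typo, as the restatement in Section \ref{LTIpoisson} confirms), and your remark on carrying the average and peak constraints into the block memoryless optimization is the one point of care the paper leaves implicit.
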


\begin{remark}
The proof of the upper and lower bounds follow from more specialized versions for deterministic LTI channels \cite{Massey, Goldsmith} or specific networks (e.g. MAC MLN in \cite{Verdu}); that is, we set or reset the channel using $k$ consecutive inputs.  Other related work which considers networks with more general models of memory than LTI \cite{Gallager, Verdu, Dabora,  Chen} take on a different approach.  That is, the outer bound derivation begins with an $n$-letter expression which often resembles the capacity region of the {\it memoryless} case.  Specific computations, typically exploiting entropy bounds are then developed.  In contrast, our outer bound is completely operational.  Note that one need not know the explicit form of the capacity in order to prove or employ our bounds.
\end{remark}

Computing $C_r$ in Eq. \eqref{eqn:r1} is difficult to compute for large values of $r$, particularly when $X$ is a continuous random variable. However, we shall be able to derive upper and lower bounds on $C_r$ which will yield meaningful bounds on capacity for our LTI Poisson channel. This approach is taken to derive the following theorem: 

\begin{theorem}\label{thm2} The capacity of a point to point memory limited channel of order $k$ is bounded as  follows
\begin{align}
&\max_{p(x_1, \cdots, x_{k+1})\in\mathcal{P}} ‎I(X_{k+1};Y_{k+1}|X_{1:k})\leq C\leq  \max_{p(x_1, \cdots, x_{k+1})\in\mathcal{P}} ‎I(X_{1:k+1};Y_{k+1}), \label{prob1}
\end{align}
where $\mathcal{P}$ be the set of joint pmfs $p(x_1, \cdots, x_{k+1})$ satisfying $$p_{X_1\cdots X_k}(x_1, \cdots, x_k)=p_{X_2\cdots X_{k+1}}(x_1, \cdots, x_k),$$ for every values of $x_1, \cdots, x_k$.
\end{theorem}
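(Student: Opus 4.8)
The plan is to derive both inequalities from Corollary \ref{cor1}, which gives $\frac{r}{k+r}C_r\le C\le C_r$ with $C_r=\frac1r\max_{p(x_{1:k+r})}I(X_{1:k+r};Y_{k+1:k+r})$, and then send $r\to\infty$. Since $\frac{r}{k+r}\to1$, it suffices to sandwich $\lim_{r\to\infty}C_r$ between the two quantities in \eqref{prob1}: for the lower bound I will show $C_r\ge\max_{p\in\mathcal{P}}I(X_{k+1};Y_{k+1}\mid X_{1:k})$ for every $r$, and for the upper bound $C_r\le\max_{p\in\mathcal{P}}I(X_{1:k+1};Y_{k+1})+\epsilon_r$ with $\epsilon_r\to0$.

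For the lower bound, fix any $p\in\mathcal{P}$. The defining constraint $p_{X_{1:k}}=p_{X_{2:k+1}}$ says exactly that $p_{X_{1:k}}$ is invariant under the order-$k$ transition kernel $p(x_{k+1}\mid x_{1:k})$, so I can extend $p$ to a stationary order-$k$ Markov input process $\{X_i\}$ whose every length-$(k+1)$ window has law $p$. Feeding this process into the block memoryless channel lower bounds $C_r$, and I would peel the bound as
\begin{align}
I(X_{1:k+r};Y_{k+1:k+r})\ge I(X_{k+1:k+r};Y_{k+1:k+r}\mid X_{1:k})\ge\sum_{i=k+1}^{k+r}I(X_i;Y_i\mid X_{1:i-1}),
\end{align}
where the first step drops the nonnegative term $I(X_{1:k};Y_{k+1:k+r})$ and the second uses the chain rule over inputs together with the fact that a single output is no more informative than the whole block. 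The key simplification is that, because the channel is memory limited of order $k$ and the input is order-$k$ Markov, $Y_i\perp X_{1:i-k-1}\mid X_{i-k:i-1}$, so each term collapses to $I(X_i;Y_i\mid X_{i-k:i-1})$, which by stationarity equals $I(X_{k+1};Y_{k+1}\mid X_{1:k})$ under $p$. This gives $C_r\ge I(X_{k+1};Y_{k+1}\mid X_{1:k})$; multiplying by $\tfrac{r}{k+r}$, letting $r\to\infty$, and maximizing over $p\in\mathcal{P}$ yields the left inequality.

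For the upper bound, I start from an arbitrary input law on the block and bound $I(X_{1:k+r};Y_{k+1:k+r})\le\sum_{i=k+1}^{k+r}I(X_{i-k:i};Y_i)$; this follows because the MLN property makes the outputs conditionally independent given all inputs, each $Y_i$ depends only on $X_{i-k:i}$, and the unconditioned joint entropy is subadditive. Writing $Q_i=p_{X_{i-k:i}}$ and letting $f$ denote the single-shot mutual-information functional of the fixed $(k+1)$-input/one-output channel, concavity of mutual information in the input distribution gives $\frac1r\sum_i f(Q_i)\le f(\bar Q)$ with $\bar Q=\frac1r\sum_i Q_i$. The point is that the first-$k$ and last-$k$ marginals of $\bar Q$ are telescoping averages of the same length-$k$ window marginals, so they differ only in a single boundary term and therefore agree up to total variation $O(1/r)$; that is, $\bar Q$ is an $O(1/r)$-perturbation of the stationarity constraint defining $\mathcal{P}$. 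Projecting $\bar Q$ onto $\mathcal{P}$ and invoking continuity of $f$ then gives $C_r\le\max_{p\in\mathcal{P}}I(X_{1:k+1};Y_{k+1})+\epsilon_r$, and since $C\le C_r$ for every $r$, letting $r\to\infty$ finishes the right inequality.

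The step I expect to be the main obstacle is this last one: making ``$\bar Q$ is nearly in $\mathcal{P}$'' rigorous. For a bounded input alphabet (the maximum-amplitude constraint $X_i\le\mathsf{A}$) the set $\mathcal{P}$ is compact and $f$ is uniformly continuous, so an $O(1/r)$ deviation costs only $o(1)$ and the projection-and-limit argument is routine. Under only the average constraint $\sum_i X_i/n\le\alpha$, however, the input space is noncompact, and I would need a separate continuity/tightness argument (or an explicit coupling that repairs the boundary mismatch without changing the average input) to control $f(\bar Q)-\max_{\mathcal{P}}f$. This is where the real care is needed, whereas the lower bound is exact and requires no limiting continuity argument.
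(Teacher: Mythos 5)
Your proof is correct and follows essentially the same route as the paper: the lower bound extends $p\in\mathcal{P}$ to a stationary order-$k$ Markov input, single-letterizes to $\sum_i I(X_i;Y_i\mid X_{i-k:i-1})$ (you via the chain rule over inputs, the paper via output entropies), and lets $\frac{r}{k+r}\to 1$; the upper bound uses $I(X_{1:k+r};Y_{k+1:k+r})\le\sum_i I(X_{i-k:i};Y_i)$, averages the window marginals (your concavity step is exactly the paper's time-sharing variable $Q$), and exploits the telescoping $O(1/r)$ deviation of the averaged marginal from $\mathcal{P}$. The continuity-in-the-limit issue you flag for the unbounded-alphabet case is likewise left implicit in the paper, which simply invokes continuity of mutual information with respect to the input distribution.
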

The proof can be found in Appendix \ref{AppendixProofThm2} . 
\begin{remark}
The form of the upper bound part of Eq. \eqref{prob1} is similar to that of Eqs. \eqref{eqn:thm1} and \eqref{eqn:r1} in Theorem \ref{thm1} for the case of $r=1$ (we denote this capacity as $C_1$). However  the upper bound part of Eq. \eqref{prob1} has its maximum over $p(x_1, \cdots, x_{k+1})\in\mathcal{P}$ whereas in $C_1$, the maximum is over all joint distributions. Therefore  the upper bound in Eq. \eqref{prob1} is less than or equal to $C_1$. 
\end{remark}

‎\section{Symmetrized KL Divergence Upper Bound}‎
\label{symm}

Expressing the capacity in terms of the maximum of the mutual information over a certain class of probability distributions is inadequate from a practical perspective in many networks and channels of interest. This inadequacy extends to some memoryless channels such as memoryless Poisson channels \cite{Lapidoth2011, LapidothMoser2009}. However, this need is more pronounced in the computationally burdensome problems of channels with state 
as previously discussed in Section \ref{capacity}. To address this  problem, in this section, we propose a new upper bound on mutual information based on the  symmetrized Kullback-Leibler (KL) divergence. 

 The basic idea of the upper bound is simple: note that $I(X;Y)=D(p(x,y)\|p(x)p(y))$, where $D(p\|q)$ is the KL divergence and is defined via the equation $D(p\|q)=\sum_{i}p_i\log p_i/q_i.$ Thus, if we can find an upper bound on KL divergence functional $D(\cdot \| \cdot)$, we can find an upper bound on mutual information by evaluating the upper bound at the pair $\big(p(x,y), p(x)p(y)\big)$. One could then seek  known divergences that serve as an upper bound on the KL divergence; two natural choices are the \emph{$\alpha$-Renyi divergence} for $\alpha>1$, and the \emph{symmetrized divergence}. The former choice $D_{\alpha}(p(x,y)\|p(x)p(y))$, in a slightly modified form, is called the  $\alpha$-Renyi mutual information and does show up in the context of strong converses and error exponents (e.g. see \cite{Beigi, Gupta}). However, given a channel $p(y|x)$, computing the maximum of $$\max_{p(x)} D_{\alpha}(p(x,y)\|p(x)p(y)),$$ for $\alpha>1$ does not seem to be any easier or more insightful than performing the optimization for $\alpha=1$ (which yields Shannon capacity). On the other hand, the symmetrized KL divergence will offer some computational advantages.

\begin{definition}[Symmetrized KL divergence] Let
$$D_{\mathsf{sym}}(p\|q):=D(p\|q)+D(q\|p).$$
\end{definition}
Clearly $D_{\mathsf{sym}}(p\|q)\geq D(p\|q)$. 

Let us define the following upper bound on the mutual information:
$$A(p(x,y))=D_{\mathsf{sym}}(p(x,y)\|p(x)p(y))\geq D\big(p(x,y)\|p(x)p(y)\big)=I(X;Y).$$

Similarly, for a channel $p(y|x)$ we define
$$A(p(y|x))=\max_{p(x)}D_{\mathsf{sym}}(p(x,y)\|p(x)p(y))\geq \max_{p(x)}I(X;Y)=C(p(y|x)).$$

The quantity $A(p(y|x))$ is always an upper bound on the capacity. It is straightforward to show that
\begin{align*}
D_{\mathsf{sym}}\big(p(x,y)\|p(x)p(y)\big)&=\sum_{x,y}p(x,y)\log\frac{p(x,y)}{p(x)p(y)}+\sum_{x,y}p(x)p(y)\log\frac{p(x)p(y)}{p(x,y)}
\\&=\sum_{x,y}p(x,y)\log p(y|x) - \sum_{x,y}p(x,y)\log p(y)
\\&\qquad+\sum_{x,y}p(x)p(y)\log p(y)+\sum_{x,y}p(x)p(y)\log\frac{1}{p(y|x)}
\\&=\sum_{x,y}p(x,y)\log p(y|x) - \sum_{y}p(y)\log p(y)
\\&\qquad+\sum_{y}p(y)\log p(y)+\sum_{x,y}p(x)p(y)\log\frac{1}{p(y|x)}
\\&=\sum_{x,y}\big[p(x,y)-p(x)p(y)\big]\log(p(y|x)).
\\&=\mathbb{E}_{p(x,y)}\log p(Y|X) - \mathbb{E}_{p(x)p(y)}\log p(Y|X).
\end{align*}
 The noteworthy feature here is that $D_{\mathsf{sym}}\big(p(x,y)\|p(x)p(y)\big)$ is a quadratic expression in $p(x)$ whereas $D\big(p(x,y)\|p(x)p(y)\big)$ is a logarithmic curve. To see that $D_{\mathsf{sym}}\big(p(x,y)\|p(x)p(y)\big)$ is a quadratic expression, observe that given a fixed channel $p(y|x)$ we have
\begin{align*}\sum_{x,y}\big[p(x,y)-p(x)p(y)\big]\log(p(y|x))&=\sum_{x,y}\big[p(x,y)-p(x)\sum_{\tilde{x}}p(\tilde{x},y)\big]\log(p(y|x))
\\&=\sum_{x,y}
p(x)p(y|x)\log(p(y|x))\\&\qquad-
\sum_{x,\tilde{x},y}p(x)p(\tilde{x})p(y|\tilde{x})\log(p(y|x)).
\end{align*}

 Therefore $D_{\mathsf{sym}}\big(p(x,y)\|p(x)p(y)\big)$ over $p(x)$ should be computationally easier.  For this reason, it serves as a convenient upper bound for complicated expressions of mutual information, like the finite-letter characterization given in the previous section.

\begin{remark}\label{rmk:rmkrmk}
$A(p(y|x))=C(p(y|x))$  if and only if the capacity of the channel is zero. This is because $D_{\mathsf{sym}}(p(x,y)\|p(x)p(y))= D\big(p(x,y)\|p(x)p(y)\big)$ implies $D\big(p(x)p(y)\|p(x,y)\big)=0$, in turn implying that $D\big(p(x,y)\|p(x)p(y)\big)=0$. Therefore this upper bound on capacity has potential for the low SNR regime. The low SNR regime is of relevance in molecular communication setups. For Poisson and Guassian channels that we numerically simulated, in the low SNR regime, the extra term $D\big(p(x)p(y)\|p(x,y)\big)$ is smaller or of the same order as $D\big(p(x,y)\|p(x)p(y)\big)$. Thus, roughly speaking if for instance the capacity is about 0.001, the upper bound will be less than or equal to 0.002 or 0.003, which is of the same order as 0.001.  Numerical results verify this ratio of the upper bound and the capacity.

\end{remark}

To demonstrate the benefit of the proposed upper bound, we consider two well-studied channels: the Gaussian channel and the memoryless Poisson channel.

\begin{example}\label{example1}
Consider a point to point Gaussian channel $p(y|x)$ where $Y=X+N$ for some Gaussian noise $N\sim \mathcal{N}(\mu, \sigma^2)$, and any input pmf $p(x)$. Then $A(p(x,y))=\mathsf{Var}(X)/\sigma^2$. The proof of this derivation is given in Appendix \ref{AppendixProofExample1}. Hence, if we have a power input constraint $P$, we get that
$$C\leq \max_{p(x)}A(p(x,y))=SNR,$$
where $SNR=P/\sigma^2$. This upper bound is within a factor two of the capacity in the low SNR regime, and hence is of the same order: as the capacity of Gaussian channel is $ \log(1+P/\sigma^2)/2$, in the low SNR regime we have $\log(1+P/\sigma^2)/2\simeq P/\sigma^2/2=SNR/2$.
\end{example}

\begin{example}\label{example2}
Consider a point to point Poisson channel $p(y|x)$ where $Y=Poisson(X+\lambda_0)$, and any input pmf $p(x)$. Then, the symmetrized KL divergence upper bound has the following compact formula:
\begin{equation}
I(X;Y)\leq A(p(x,y))=\mathsf{Cov}(X+\lambda_0, \log(X+\lambda_0)),
\end{equation}
where $\mathsf{Cov}(X,Y)=\mathbb{E}[XY]-\mathbb{E}[X]\mathbb{E}[Y]$. Further, for a Poisson channel with average intensity constraint $\alpha$ and maximum intensity constraint $\mathsf{A}$ we have
$$A_{\mathsf{Poisson}}(p(y|x)):=\max_{\substack{p(x):\\ E[X]=\alpha,~~ 0 \leq X \leq \mathsf{A}}}A(p(x,y))=\begin{cases}\frac{\alpha}{\mathsf{A}}(\mathsf{A}-\alpha)\log(\frac{\mathsf{A}}{\lambda_0}+1), & \alpha< \mathsf{A}/2
\\
\frac{\mathsf{A}}{4}\log(\frac{\mathsf{A}}{\lambda_0}+1), & \alpha\geq \mathsf{A}/2
\end{cases}$$
Hence,
$$C=\max_{\substack{p(x):\\ E[X]=\alpha,~~ 0 \leq X \leq \mathsf{A}}}I(X;Y)\leq A_{\mathsf{Poisson}}(p(y|x)).$$
The derivation of the expressions given in this example is given in Appendix \ref{AppendixProofExample2}.
\end{example}

To the best of our knowledge, bounding the capacity of the Poisson channel in the low SNR regime with finite $\mathsf{A}$ has not been previously considered.  In  \cite{Lapidoth2011},  an upper bound for low SNR Poisson channel for $\mathsf{A}=\infty$  is provided; however, this expression is very complex. One can also use the above upper bound to prove the upper bound of Eq. (6) in \cite{Lapidoth2011}.

A plot for the capacity and the upper bound is given in Figure \ref{UpperCapacityComparison}. It is observed the gap between the bound and the capacity decreases to about $0.4$ in the logarithmic scale as noise parameter increases; hence the upper bound is about 3 times the capacity, meaning that they are of the same order. Also, the figure demonstrates that for a fixed $\lambda_0$, the gap decreases for smaller values of $\alpha$. Note that both increasing $\lambda_0$ and decreasing $\alpha$ can be interpreted as decreasing SNR.

\begin{figure*} [t]
\begin{center}
\includegraphics[scale=0.32,angle=0]{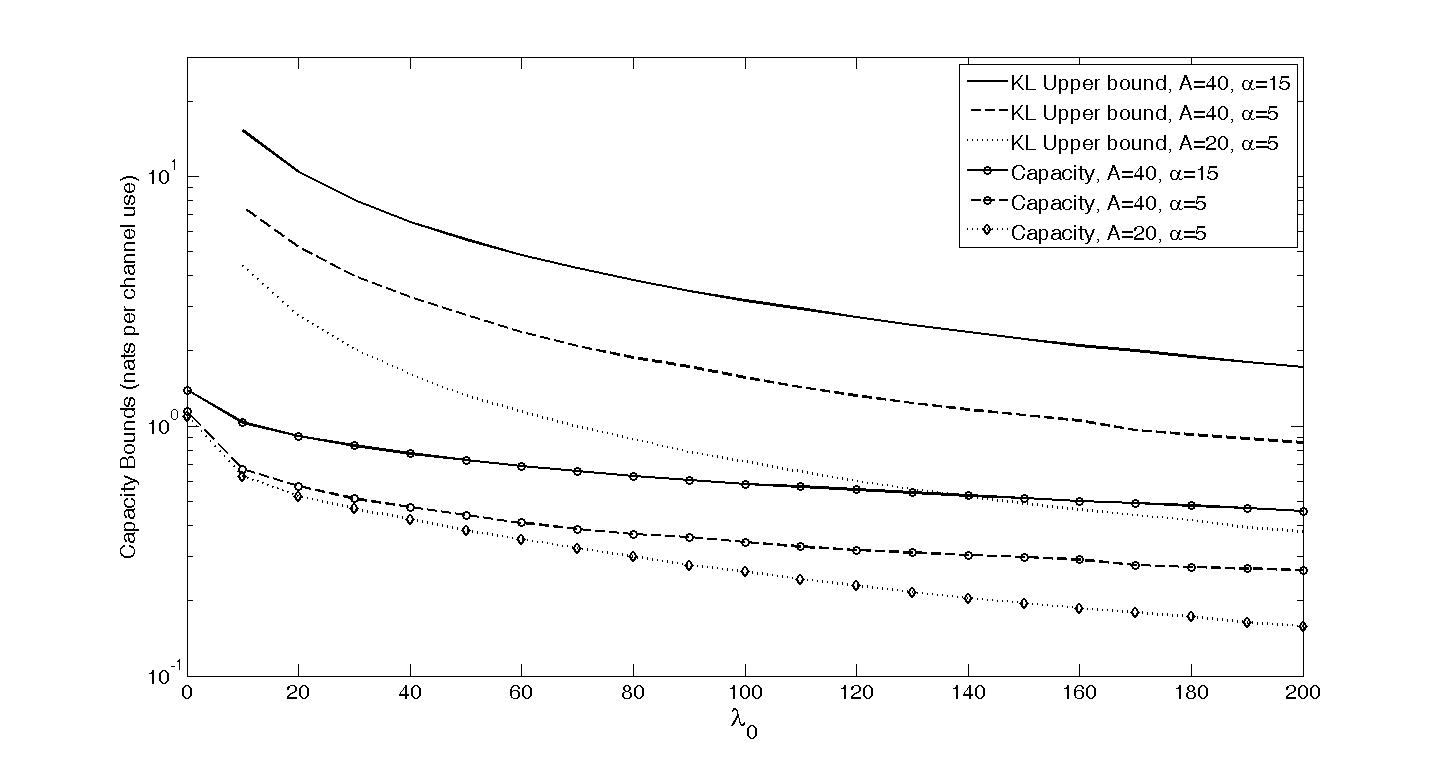}
\end{center}
\caption{Capacity and Symmetrized divergence upper bound in terms of $\lambda_0$ for memoryless Poisson channel with $A=40$ and $\alpha=5, 15$.}
\label{UpperCapacityComparison}
\end{figure*}

\begin{remark}
An interesting observation is that the pmf that is maximizing $A(p(x,y))$ in Example \ref{example2}  is 
 $p(x)=\frac{\alpha}{A}\delta(x-A)+(1-\frac{\alpha}{A}) \delta(x)$. This pmf is exactly the same pmf that maximizes $I(X;Y)$ of a Poisson channel at low SNRs, when $\lambda_0$ is large and $\alpha \leq \frac{A}{2}$ \cite{Cao}. Therefore mutual information and its upper bound reach their maximums at exactly the same point! The same phenomenon occurs for a BSC(p) channel where 
\begin{align*}
A_{\mathsf{BSC}}=-\log_2(\sqrt{p(1-p)})-h(p).
\end{align*}
and the maximum occurs at uniform input pmf distribution. 
\end{remark}

\begin{remark}\label{remarkTopsoe} A widely used technique for proving an upper bound uses the following relation \cite{Topsoe, AllertonUpper}: $$D(p(x,y)\|p(x)p(y))=\min_{r(y)} D(p(x,y)\|p(x)r(y)).$$
Therefore any arbitrary $r(y)$ gives us an upper bound. This idea can be potentially mixed with ours: for any arbitrary $r(y)$ and any upper bound on the divergence functional, we can evaluate the upper bound at the pair $\big(p(x,y), p(x)r(y)\big)$.
\end{remark}

\subsection{Some properties of the upper bound}
\label{section:someproperties}
Even though the proposed upper bound is not equal to the mutual information, there are similarities between the behaviours of both, when viewed as a function  of the channel for a fixed input distribution. Below we provide some of these properties, as well as an alternative proof of the upper bound via Jensen's inequality.

\begin{enumerate}
\item Similar to $I(X;Y)$, the upper bound $A(p(x,y))$ is convex in $p(y|x)$ for a fixed $p(x)$. Indeed both $D(p(x,y)\|p(x)p(y))$ and $D(p(x)p(y)\|p(x,y))$ are convex in $p(y|x)$ because the KL divergence is convex in its input pair.
\item Similar to capacity, $A$ factorizes for product channels, i.e. $$A\big(p(y_1|x_1)p(y_2|x_2)\big)=A\big(p(y_1|x_1)\big)+A\big(p(y_2|x_2)\big).$$ In other words, the maximum
$$\max_{p(x_1, x_2)}D_{\mathsf{sym}}(p(x_1,x_2,y_1, y_2)\|p(x_1,x_2)p(y_1,y_2)),$$
occurs at some product distribution $p(x_1, x_2)=p(x_1)p(x_2)$. The proof can be found in Appendix \ref{AppendixPropertiesProof}. 

\item The fact that $\sum_{x,y}\big[p(x,y)-p(x)p(y)\big]\log(p(y|x))$ is an upper bound on $I(X;Y)$ can also be seen from Jensen's inequality on the log function; this is not unexpected since many of the inequalities on divergence can be proved using Jensen's inequality. The proof can be found in Appendix \ref{AppendixPropertiesProof}. 
\end{enumerate}

\section{Capacity of LTI-Poisson channel}
\label{LTIpoisson}
Let us denote the capacity of the LTI-Poisson channel by $\mathcal{C}(\mathsf{A}, \alpha, \textbf{p}, \lambda_0)$. Assuming that $p_j=0$ for $j>k$, to estimate $\mathcal{C}(\mathsf{A}, \alpha, \textbf{p}, \lambda_0)$, from Corollary \ref{cor1} we should compute 
\begin{align*}C_r=\frac{1}{r}\max_{p(x_{1:k+r})}I(X_{1:k+r}; Y_{k+1:k+r}),\end{align*}
where the maximum should be taken over pmfs satisfying $$X_i\in [0, \mathsf{A}], \qquad \frac{1}{k+r}\sum_{i=1}^{k+r}\mathbb{E}[X_i]\leq \alpha,$$ and $Y_i=\mathsf{Poisson}(\lambda_0+\sum_{j=0}^kp_{j}X_{i-j})$. Alternatively, one can use the upper and lower bounds on $C_r$. 

\subsection{Numerical  results}

\begin{figure*}
\begin{center}
\includegraphics[scale=0.3,angle=0]{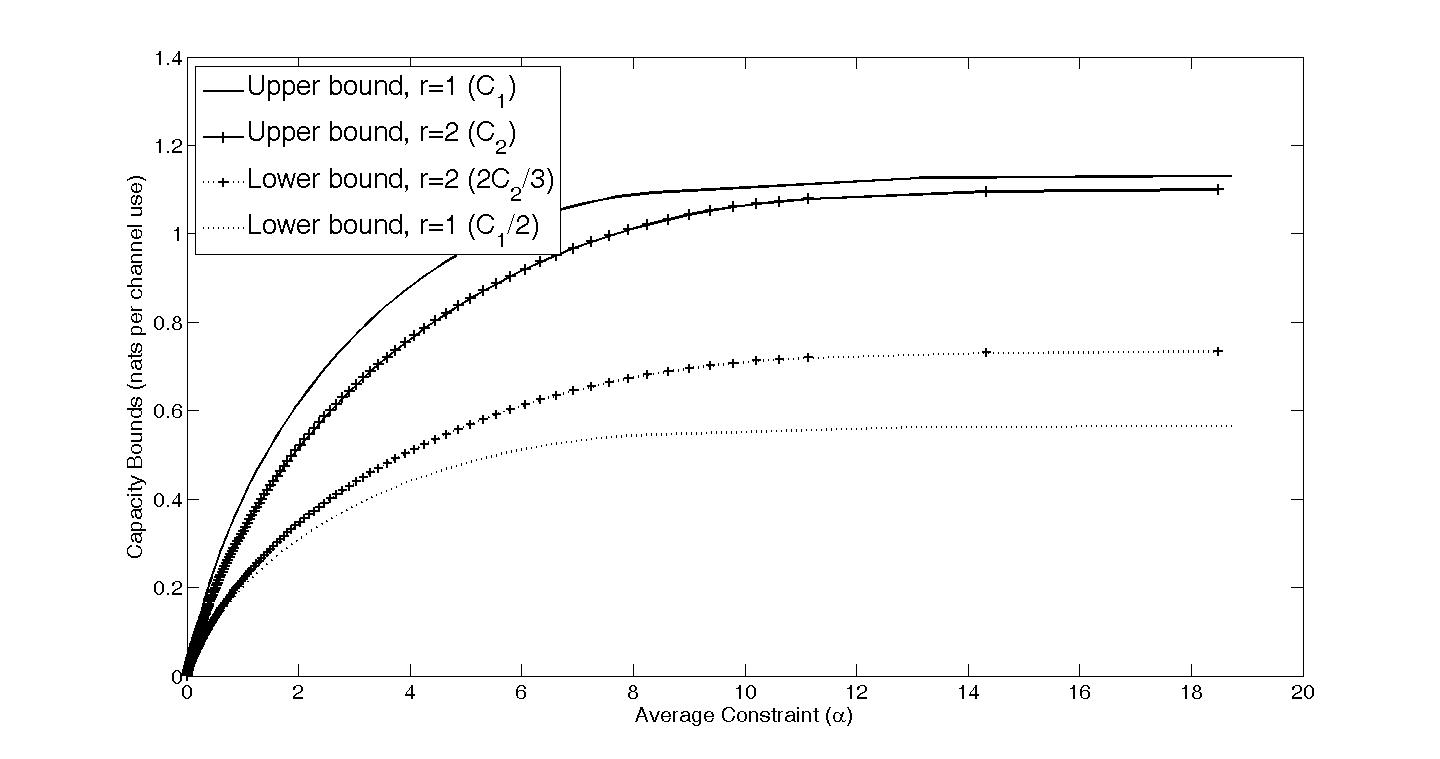}
\end{center}
\caption{Capacity lower and upper bounds (Theorem 1) in terms of average constraint $\alpha$ for LTI-Poisson channel with $\lambda_0=5$, $\mathsf{A}=40$ for $r=1, 2$.}
\label{Figure2a}
\end{figure*}

\begin{figure*}
\begin{center}
\includegraphics[scale=0.44,angle=0]{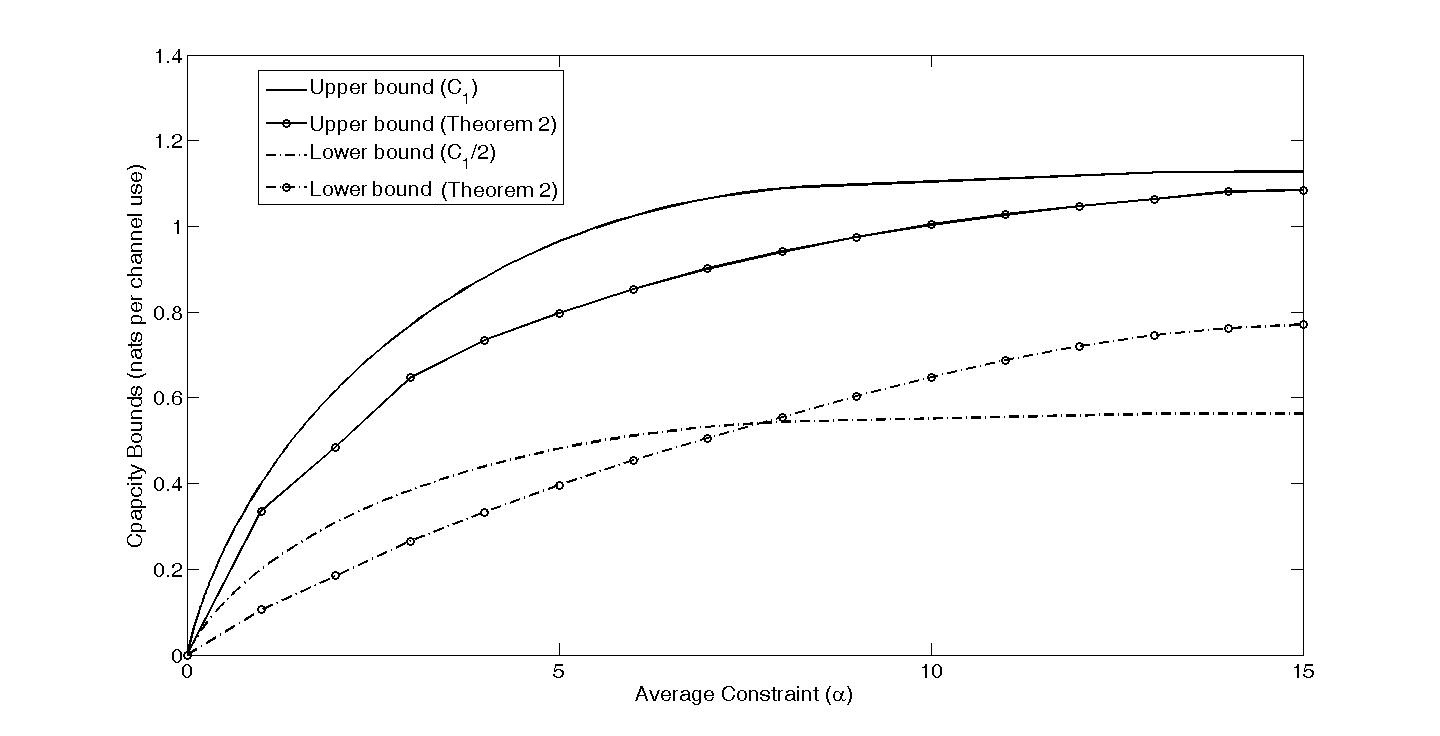}
\end{center}
\caption{Comparison of proposed upper and lower bounds in Theorem 2 with that of Theorem 1 (for $r=1$) in terms of $\alpha$ for LTI-Poisson channel with $\mathsf{A}=40$ and $\lambda_0=5$.}
\label{Figure3a}
\end{figure*}

\begin{figure*} 
\begin{center}
\includegraphics[scale=0.33,angle=0]{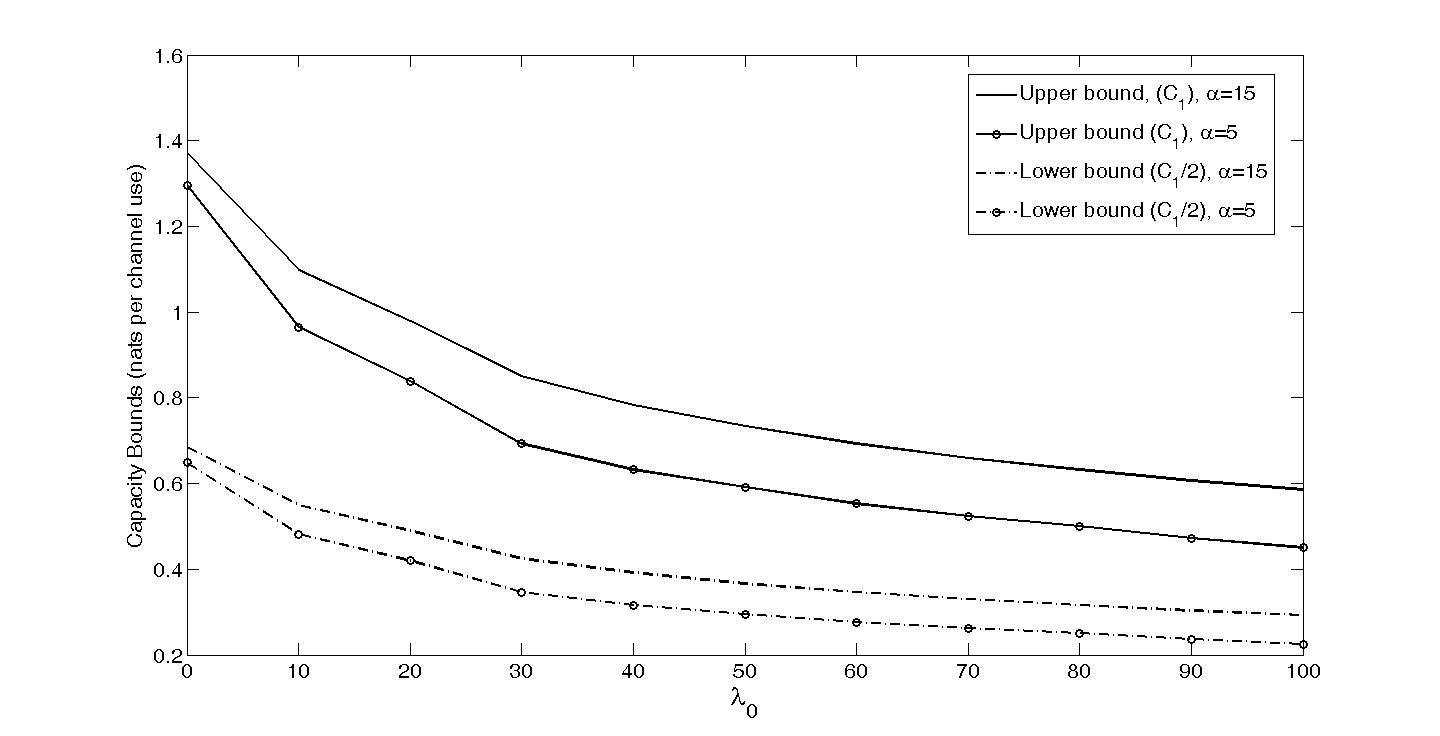}
\end{center}
\caption{Capacity upper and lower bounds proposed in Theorem 1, $C_r$ and $\frac{K}{r+K} C_r$ for $r=1$ in terms of $\lambda_0$ for LTI-Poisson channel with $\mathsf{A}=40$, $\alpha=1$ and $\alpha=5, 15$. }
\label{Figure4a}
\end{figure*}
 To evalute the proposed capacity bounds, we have assumed a transmitter to receiver distance of $8 \mu m$.  The medium diffusion constant is taken to be $2‎\times‎10^{-8} cm^2/s$, which is a practical value \cite{practical-diff}. The memory of the channel is assumed to be $k=1$, and the time slot of each channel use is $1.5 s$. To find the optimal input distributions maximizing $C_r$, the numerical Blahut-Arimoto algorithm (BA) \cite{blahut} is used.   Fig. \ref{Figure2a} depicts the lower and upper bounds proposed in Corollary \ref{cor1} in terms of average power constraint $\alpha$ for an LTI-Poisson channel with the parameters $\lambda_0=5$, $A=40$ for  $r=1, 2$. Observe that both upper and lower bounds are increasing as $\alpha$ increases. Observe that the bounds (and equivalently the capacity) saturate; this is expected since we know that the capacity does not increase when we increase $\alpha$ beyond $\mathsf{A}$. As expected, for $r=2$, the gap between the lower and upper bounds decreases in comparison with $r=1$ and the bounds approach the capacity. Note that the upper bound ($C_r$) is decreasing in terms of $r$ as the figure shows that $C_2$ is always higher than $C_1$. Similarly, the lower bound $\frac{r}{k+r}C_r$ is increasing for $r=1,2$.

Fig. \ref{Figure3a} compares the proposed upper bound and lower bound in Theorem \ref{thm2} with that of Corollary \ref{cor1} (for $r=1$) , \emph{i.e.} $C_1$ and $C_1/2$. As expected, the upper bound in Theorem \ref{thm2} is tighter than $C_1$ as the maximum is taken  over a smaller set of distributions. However, the lower bound improves the $C_1/2$ for larger values of $\alpha$.

Fig. \ref{Figure4a} demonstrates the behavior of the proposed capacity bounds in Theorem 1 in terms of noise parameter of $\lambda_0$. Observe that the gap between the upper bound and lower bound decreases as $\lambda_0$ increases. Also, the figure shows that the sensitivity of the bounds to the noise parameter $\lambda_0$ is higher for smaller noise mean values.

\subsection{Some analytical results}
By definition, the capacity $\mathcal{C}(\mathsf{A}, \alpha, \textbf{p}, \lambda_0)$ is increasing in $\mathsf{A}$ and $\alpha$. It is also decreasing in $\lambda_0$. This is intuitive and can be shown concretely using methods similar to those in Eqs. (51)-(57) of \cite{LapidothMoser2009}. To study the behavior of the capacity of LTI-Poisson channel in terms of $\textbf{p}$, first observe that 
from Eq. \eqref{eqn1nn}, for any $\beta> 0$
$$\mathcal{C}(\mathsf{A}, \alpha, \textbf{p}, \lambda_0)=\mathcal{C}(\beta A, \beta \alpha,\frac{1}{\beta}\textbf{p}, \lambda_0).$$
Therefore when studying $\mathcal{C}(\mathsf{A}, \alpha, \textbf{p}, \lambda_0)$, without loss of generality we can assume that $\sum_{i}p_i=1$.

 The  following lemma provides a characterization for $C_1$ for the LTI Poisson model. Since $C_1$ shows up in both the lower and upper bounds of Theorem 1, the lemma allows us to easily compute lower and upper bounds on the capacity region.
\begin{lemma}\label{lemma1}
To compute $C_1$ for a given average power constraint $\alpha$, but unlimited peak power constraint $\mathsf{A}=\infty$, instead of maximizing over all pmfs $p(x_{1:k+1})$, it suffices to look at random variables of the form $X_i=\beta_i X$ for some rv $X$ and non-negative reals $\beta_i$. 
\end{lemma}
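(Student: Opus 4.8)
The plan is to exploit that, for $r=1$, the output $Y_{k+1}$ depends on the inputs only through the single non-negative scalar $Z:=\sum_{j=0}^{k}p_j X_{k+1-j}$, in which the coefficient of $X_i$ is $p_{k+1-i}$. Since the channel law is $Y_{k+1}\sim\Po(\lambda_0+Z)$, both Markov chains $X_{1:k+1}\to Z\to Y_{k+1}$ and $Z\to X_{1:k+1}\to Y_{k+1}$ hold: the first because $Y_{k+1}$ is conditionally independent of $X_{1:k+1}$ given $Z$, and the second because $Z$ is a deterministic function of $X_{1:k+1}$. Applying the data-processing inequality to each direction gives $I(X_{1:k+1};Y_{k+1})=I(Z;Y_{k+1})$, so the mutual information being maximized in $C_1$ (the $r=1$ case of \eqref{eqn:r1}) depends on the input pmf only through the distribution of the scalar $Z$.

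Because inputs of the form $X_i=\beta_i X$ form a subset of all inputs, it suffices to show that any feasible input can be replaced by one of this form that achieves the same mutual information at no larger cost. The cost is $\sum_{i=1}^{k+1}\mathbb{E}[X_i]=\mathbb{E}[\sum_i X_i]$, and writing $p_{\max}=\max_{0\le j\le k}p_j$, the non-negativity of the $X_i$ yields the pointwise inequality $Z=\sum_i p_{k+1-i}X_i\le p_{\max}\sum_i X_i$. Hence every feasible input obeys $\mathbb{E}[\sum_i X_i]\ge\mathbb{E}[Z]/p_{\max}$.

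I would then let $i^\ast$ attain $p_{\max}$ and define the replacement input $X_{i^\ast}=Z/p_{\max}$ and $X_i=0$ for $i\ne i^\ast$; this is exactly of the form $X_i=\beta_i X$ with $X:=Z$, $\beta_{i^\ast}=1/p_{\max}$ and the remaining $\beta_i=0$. This replacement produces the identical random variable $Z$ (and therefore, by the first step, the identical mutual information), while its cost equals $\mathbb{E}[Z]/p_{\max}$, which by the bound above does not exceed the original cost and so still satisfies $\tfrac{1}{k+1}\sum_i\mathbb{E}[X_i]\le\alpha$. Consequently the maximization defining $C_1$ may be restricted to inputs $X_i=\beta_i X$ without changing its value, which is the claim.

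The argument is short, and I expect the only delicate point to be the identity $I(X_{1:k+1};Y_{k+1})=I(Z;Y_{k+1})$: one must verify both Markov relations carefully, in particular the conditional independence $Y_{k+1}\perp X_{1:k+1}\mid Z$ arising from $Y_{k+1}\sim\Po(\lambda_0+Z)$, and that the new inputs are legitimate (deterministic) functions of $Z$. The sole place where the hypothesis $\mathsf{A}=\infty$ is used is the concentration of the entire signal onto the coordinate $i^\ast$; under a finite peak constraint this concentration may be infeasible and the reduction would fail.
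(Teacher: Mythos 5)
Your proposal is correct, and its first half is exactly the paper's argument: both proofs hinge on the two-sided data-processing identity $I(X_{1:k+1};Y_{k+1})=I(S;Y_{k+1})$, where $S$ (your $Z$) is the scalar $\sum_{j=0}^{k}p_jX_{k+1-j}$ through which the Poisson output sees the inputs. Where you diverge is in how you lift a distribution on $S$ back to a legitimate input tuple. The paper sets $\tilde X_i=\beta_i S$ with $\beta_i=\mathbb{E}[X_i]/\mathbb{E}[S]$, which preserves every individual mean $\mathbb{E}[\tilde X_i]=\mathbb{E}[X_i]$ and reproduces $S$ exactly because $\sum_j p_j\beta_{k+1-j}=1$; feasibility of the average constraint is then immediate with no comparison argument needed. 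You instead concentrate all of $S$ on the single tap $i^\ast$ attaining $p_{\max}$, and justify feasibility via the pointwise bound $S\le p_{\max}\sum_iX_i$, which shows the new cost $\mathbb{E}[S]/p_{\max}$ cannot exceed the old one. Your construction buys a strictly sharper conclusion --- with $\mathsf{A}=\infty$ a single nonzero coordinate suffices, reducing $C_1$ to a memoryless Poisson channel with gain $p_{\max}$ and relaxed mean constraint $(k+1)\alpha$ --- at the price of the extra cost-comparison step and the (trivial) degenerate case $p_{\max}=0$, which you should dispatch by noting the channel then carries zero information. Your closing observation about where $\mathsf{A}=\infty$ enters is accurate, and applies equally to the paper's construction, since $\beta_iS$ can exceed any finite peak constraint even when all $X_i\le\mathsf{A}$.
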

\begin{proof}
We have
\begin{align*}C_1=\max_{p(x_{1:k+1})}I(X_{1:k+r}; Y_{k+1}),\end{align*}
where 
\begin{align*}Y_{k+1}\sim\Po(\lambda_0+\sum_{j=0}^{k+1}p_{j}X_{i-j}).\end{align*}
Take an arbitrary pmf $p(x_{1:k+1})$. Let $S=\sum_{j=0}^{k+1}p_{j}X_{i-j}$. Then, we have the Markov chain $X_{1:k+1}\rightarrow S\rightarrow Y_{k+1}$. Hence, $I(X_{1:k+r}; Y_{k+1})\leq I(S; Y_{k+1})$. On the other hand, since $S$ is a function of $X_{1:k+r}$, we have $I(X_{1:k+r}; Y_{k+1})\geq I(S; Y_{k+1})$. Therefore 
$$I(X_{1:k+r}; Y_{k+1})=I(S; Y_{k+1}).$$
Let 
$\tilde{X}_i=\beta_i\cdot S$ where $\beta_i=\mathbb{E}[X_i]/\mathbb{E}[S]$ for $i=1,2, \cdots, k+1$. Further let
$$\tilde{Y}_{k+1}\sim\Po(\lambda_0+\sum_{j=0}^{k+1}p_{j}\tilde{X}_{i-j}).$$
Clearly $\tilde{X}_i$ are proportional to each other.  These variables are a legitimate choice as input pmf since $\mathbb{E}[\tilde{X}_i]=\mathbb{E}[X_i]$ and hence the average power constraint is preserved. Further
\begin{align*}\tilde{S}&=\sum_{j=0}^{k+1}p_{j}\tilde{X}_{i-j}
\\&=S\cdot \sum_{j=0}^{k+1}p_{j}\mathbb{E}[X_{i-j}]/E[S]
\\&=S.
\end{align*}
and hence $I(X_{1:k+r}; Y_{k+1})=I(\tilde{X}_{1:k+r}; \tilde{Y}_{k+1})$. This will complete the proof. 
\end{proof}

Observe that the channel is the cascade of an LTI filter defined by $\textbf{p}$ with a memoryless Poisson channel; therefore one may guess that its capacity is less than or equal to the capacity of a memoryless Poisson (which corresponds to the special case of $p_0=1, p_i=0,~i\geq 1$). But this conjecture  requires a proof as we are dealing with channels with input constraints.\footnote{If we have  $X-X'-Y$, the capacity of the channel $p(y|x)$ is not necessarily less than the capacity of $p(y|x')$ when we impose input constraints on $X$ and $X'$.}

\begin{theorem}\label{prop2}

Assuming that $\sum_{i}p_i=\sum_{i}p'_i=1$, we have 
$$\mathcal{C}(\mathsf{A}, \alpha, \textbf{p}, \lambda_0)\geq \mathcal{C}(\mathsf{A}, \alpha, \textbf{p}', \lambda_0),$$
if $\textbf{p}'= \textbf{p}\star‎ \textbf{q}$ for some non-negative sequence $\textbf{q}=(q_0, q_1, \cdots)$.
\end{theorem}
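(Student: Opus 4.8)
The plan is to show that the channel with impulse response $\textbf{p}'$ is a \emph{stochastically degraded} version of the channel with impulse response $\textbf{p}$, and then invoke the data processing inequality. Since $\textbf{p}'=\textbf{p}\star\textbf{q}$ and $\sum_i p_i=\sum_i p'_i=1$, the identity $\sum_i p'_i=(\sum_i p_i)(\sum_i q_i)$ forces $\sum_i q_i=1$, so $\textbf{q}$ is itself a probability mass function on the non-negative integers. This lets us interpret $q_\ell$ as the probability that a molecule is \emph{additionally delayed} by $\ell$ time slots.

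First I would fix an arbitrary blocklength $n$ and an arbitrary input distribution $p(x_{1:n})$ satisfying the common constraints $X_i\in[0,\mathsf{A}]$ and $\frac1n\sum_i\E[X_i]\le\alpha$, and let $Y_{1:n}$ be the output of the $\textbf{p}$-channel. I would then construct $Y'_{1:n}$ from $Y_{1:n}$ by a randomized map that does \emph{not} depend on $X$: view each of the $Y_i$ counted molecules as an independent point, displace it from slot $i$ to slot $i+\ell$ with probability $q_\ell$ (a multinomial split of the count $Y_i$), and let $\tilde Y_m=\sum_{i\le m} D_{i\to m}$ collect all molecules landing in slot $m$. By the displacement/thinning theorem for Poisson random variables, conditioned on $X$ the counts $D_{i\to m}\sim\Po(\mu_i q_{m-i})$ are independent across all pairs $(i,m)$, where $\mu_i=\lambda_0+\sum_j p_j X_{i-j}$; consequently the $\tilde Y_m$ are independent Poisson with mean $\sum_{i\le m}\mu_i q_{m-i}$. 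A short convolution computation (using $X_{i-j}=0$ for $i-j<1$) shows the signal part of this mean is exactly $\sum_j p'_j X_{m-j}$, which is the desired $\textbf{p}'$ signal term.

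The one place the construction is not immediately exact is the background: displacing the $\Po(\lambda_0)$ background produces mean $\lambda_0\sum_{i\le m}q_{m-i}=\lambda_0\sum_{u=0}^{m-1}q_u$ at slot $m$, which falls short of $\lambda_0$ near the start of the block. The key observation is that this deficit $\lambda_0\big(1-\sum_{u=0}^{m-1}q_u\big)=\lambda_0\sum_{u\ge m}q_u$ is \emph{non-negative}, since the partial sums of the pmf $\textbf{q}$ never exceed $1$. I would therefore add to $\tilde Y_m$ an independent fresh $\Po(\lambda_0\sum_{u\ge m}q_u)$ term; by Poisson superposition the result $Y'_m$ is exactly $\Po(\lambda_0+\sum_j p'_j X_{m-j})$, independent across $m$, i.e.\ it has precisely the law of the $\textbf{p}'$-channel output. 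This whole map from $Y_{1:n}$ to $Y'_{1:n}$ is independent of $X$, so $X_{1:n}\to Y_{1:n}\to Y'_{1:n}$ is a Markov chain and the data processing inequality gives $I(X_{1:n};Y'_{1:n})\le I(X_{1:n};Y_{1:n})$.

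Finally, because the input constraints involve only $X$ and are identical for the two channels, maximizing both sides over the same set of admissible $p(x_{1:n})$, dividing by $n$, and letting $n\to\infty$ yields $\mathcal{C}(\mathsf{A},\alpha,\textbf{p}',\lambda_0)\le\mathcal{C}(\mathsf{A},\alpha,\textbf{p},\lambda_0)$, as claimed. I expect the main obstacle to be precisely the background bookkeeping just described: naively displacing the output redistributes the background and only matches the target law in the block interior, and the argument only closes because the boundary deficit has a definite sign and can be repaired by injecting extra independent Poisson noise. Everything else --- the displacement theorem and the data processing step --- is routine once the degradation is set up correctly.
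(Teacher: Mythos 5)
Your proof is correct, but it runs in the opposite direction from the paper's. You degrade the channel at the \emph{receiver}: you keep the codebook fixed, displace each received molecule by an extra $\textbf{q}$-distributed delay, top up the resulting background deficit $\lambda_0\sum_{u\ge m}q_u$ with fresh independent Poisson noise, and conclude via the displacement/superposition theorems that $X_{1:n}\to Y_{1:n}\to Y'_{1:n}$ is a Markov chain with $Y'_{1:n}$ distributed exactly as the $\textbf{p}'$-channel output; data processing then finishes the job. The paper instead works at the \emph{transmitter}: it views the channel as an LTI filter followed by a fixed memoryless Poisson channel, identifies codewords with the filtered signal sequences $\textbf{s}=(\textbf{x}\star\textbf{p})_n$, and shows the inclusion of feasible signal sets $\mathcal{S}_n(\mathsf{A},\alpha,\textbf{p}')\subset\mathcal{S}_n(\mathsf{A},\alpha,\textbf{p})$ by pre-filtering the input --- any $\textbf{s}=(\textbf{x}\star\textbf{p}')_n$ equals $((\textbf{x}\star\textbf{q})_n\star\textbf{p})_n$, and the substitute input $\textbf{x}\star\textbf{q}$ still meets the peak and average constraints because $q_i\ge 0$ and $\sum_i q_i=1$. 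The paper's route is deterministic, never touches the Poisson statistics or the background $\lambda_0$, but hinges on verifying that convolution with $\textbf{q}$ preserves the input constraints (this is exactly the footnoted concern that degradation need not respect constraints). Your route needs the Poisson thinning machinery and the background bookkeeping you describe, but it never modifies the codebook at all, so it establishes the stronger fact that the $\textbf{p}'$-channel is physically degraded with respect to the $\textbf{p}$-channel --- a conclusion that survives arbitrary input constraints and extends verbatim to the multi-terminal LTI-Poisson networks of Section II. One small polish: since the paper's capacity is defined operationally, the cleanest way to close your argument is not via $\lim_n\frac1n\max I$ but by noting that any $(n,\epsilon)$ code for $\textbf{p}'$ becomes an $(n,\epsilon)$ code for $\textbf{p}$ once the decoder locally simulates your degradation map; this sidesteps any appeal to a multi-letter capacity formula.
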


The proof is given in Appendix
 \ref{AppendixProofClaim2}.
 The above theorem allows us to define a notion of ``degradedness" for diffusion channels \emph{with power constraints}; here $\textbf{p}'$ is a degraded version of $\textbf{p}$. The capacity of a channel forms an upper bound on the capacity of its degraded versions. 

\section{Conclusions}
In this paper, we provided several capacity results for  diffusion based molecular communications networks under the LTI-Poisson model.  We first provided a computable characterization of the capacity region for a class of memory limited networks that generalizes both the linear ISI channel and the LTI-Poisson model. As expected, and confirmed by our numerical results, the gap between the lower and upper bounds decreases as $\lambda_0$ increases and also as $r$ increases and the bounds approach the capacity.  Next, we considered the special case of a point to point channel and developed some lower and upper bounds on the capacity for this case. Finally, we derived a new upper bound on mutual information using symmetrized Kullback-Leibler divergence (KL divergence). The proposed upper bound, similar to $I(X;Y)$, is convex in $p(y|x)$ for a fixed $p(x)$ and similar to capacity, factorizes for product channels. Our numerical results indicate that this upper bound works well for channels with small capacity (which can occur in MC systems). The technique we use is fairly general; a case for it is made by finding an upper bound for the Poisson channel with large background noise.

\bibliographystyle{ieeetr}
\bibliography{reftest}

\appendices
\section{Proof of Theorem \ref{thm1}}
\label{AppendixProofThm1}
 The proof of the lower bound is a direct generalization of the one given in \cite{Verdu}. However for the sake of completeness, we write it here.

To prove the lower bound, we create a memoryless channel whose capacity is a subset of the original channel. For the upper bound, we create another  memoryless channel whose capacity  subsumes that of the original channel. The lower bound is based on the idea of ``channel depreciation via deletion" and the upper bound is based on  the idea of ``channel enhancement via insertion". The idea of ``channel depreciation via deletion"  is known  and used in \cite{Verdu}. The idea of ``channel enhancement via insertion" is similar in spirit, but we have not explicitly seen it before.

\textbf{Lower bound:} Take some $r\in\mathbb{N}$ and let us partition time into frames or blocks of size $k+r$, i.e. one block for time instances 1 to $k+r$, one block for time instances $k+r+1$ to $2(k+r)$, etc. We depreciate the channel by deleting the output $\mathbf{Y}_i$'s for the first $k$ time instances of each block; i.e. the new channel has inputs $\mathbf{X}_1, \mathbf{X}_2, \cdots$ but outputs $\mathbf{Y}_{k+1}, \mathbf{Y}_{k+2}, \cdots, \mathbf{Y}_{k+r}$ and then $\mathbf{Y}_{2k+r+1}, \mathbf{Y}_{2k+r+1}, \cdots, \mathbf{Y}_{2k+2r}$, etc. 

Clearly the capacity of the new channel is less than or equal to the capacity of the original channel. 

Next note that the outputs in each block depends only on inputs in the same block because $\mathbf{Y}_i$ is produced from $\mathbf{X}_i, \mathbf{X}_{i-1}, \mathbf{X}_{i-2}, \cdots, \mathbf{X}_{i-k}$.

 In other words, $\mathbf{Y}_{2k+r+1}, \mathbf{Y}_{2k+r+1}, \cdots, \mathbf{Y}_{2k+2r}$ depends only on $\mathbf{X}_{k+r+1}, \mathbf{X}_{k+r+2}, \mathbf{X}_{2k+2r}$ and not inputs from the other blocks. Therefore the new channel is ``memoryless" (in terms of blocks) and its capacity is known. Its capacity region is equal to $r\mathcal{C}_r$ where $\mathcal{C}_r$ is defined in the statement of the theorem.
Each block corresponds to $k+r$ uses of the original channel, therefore the capacity of the original channel is greater than or equal to 
$$\frac{r}{k+r}\mathcal{C}_r.$$

\textbf{Upper bound:} Take some $r\in\mathbb{N}$ and let us partition the time into frames or blocks of size $r$, i.e. one block for time instances 1 to $r$, one block for time instances $r+1$ to $2r$, etc. 
We enhance the channel by introducing $k$ fictitious inputs at the beginning of each block (which we call ``state-reset" inputs). In other words, we create a new channel as follows:  in the beginning of the first block we introduce  fictitious inputs $\mathbf{X}'_1(1), \mathbf{X}'_2(1), \cdots, \mathbf{X}'_k(1)$  where for instance $\mathbf{X}'_i(1)$ is a vector of size $s$, formed by the $i$-th fictitious input of all of the $s$ transmitters in the first block, etc.; in the beginning of the second block, we introduce  fictitious inputs  $\mathbf{X}'_1(2), \mathbf{X}'_2(2), \cdots, \mathbf{X}'_k(2)$, etc. Next, we also change the way the outputs $\mathbf{Y}_i$ are produced. At the beginning of each block, we assume that the state is suddenly set to $(\mathbf{X}'_1, \mathbf{X}'_2, \cdots, \mathbf{X}'_k)$ of the block, meaning that the network at the beginning of each block changes its behaviour, pretending that the $k$ fictitious inputs have been the actual last $k$ inputs of the previous block. The system continues to operate using the subsequent inputs and this initialization of the state. 

To sum this up, we are allowing the transmitters to choose the state sequence at the beginning of each block. The new channel is an enhancement of the original channel, since the transmitters can simply choose to choose the fictitious inputs to be the ones that have occured in the end of the previous block, i.e. in the $j$-th block:
$$(\mathbf{X}'_1(j), \mathbf{X}'_2(j), \cdots, \mathbf{X}'_k(j))\leftarrow (\mathbf{X}_{jr-k+1}, \mathbf{X}_{jr-k+2}, \cdots, \mathbf{X}_{jr}),$$
where $\mathbf{X}_{jr}$ is the set of inputs at time $jr$ (since each block is of size $r$, the last input vector of the $j$-th block would happen at time $jr$).

We call this \emph{enhancement by insertion} since we are inserting the new inputs $\mathbf{X}'_i$~s. Observe that in the new channel, blocks do not affect each other. The $\mathbf{Y}_i$ in each block depends only on $\mathbf{X}'_i$ and $\mathbf{X}_i$ of the same block; hence the new channel is memoryless over the blocks. Therefore, the capacity of the original channel is bounded from above by the capacity of the new channel. Each block corresponds to $r$ uses of the original channel, therefore the capacity of the original channel is less than or equal to $\mathcal{C}_r$.

\section{Proof of Theorem \ref{thm2}}
\label{AppendixProofThm2}

\textbf{Proof of the lower bound:} To prove the lower bound, it suffices to show that the given rate is less than or equal to $C_r$ for all $r$. This is because the limit of $(r/(m+r)) C_r$ as $r$ goes to infinity, is the same as the limit of $C_r$ as $r$ goes to infinity.

Take some arbitrary $q_{X_{1:k+1}}(x_{1:k+1})\in\mathcal{P}$, and let us choose the following joint pmf on $X_1, X_2, \cdots, X_{k+r}$:
$$p(x_{1:k+r}):=q(x_{1:k+1})\prod_{i=k+2}^{k+r}q_{X_{k+1}|X_{k}X_{k-1}\cdots X_{1}}(x_i|x_{i-1:i-k}).$$ By using induction on $i$, one can use the property of $q(x_{1:k+1})$ being in $\mathcal{P}$ to show that for any arbitrary $k+1\leq i\leq k+r$, we have
\begin{align}p(x_{i-k:i})=q_{X_{1:k+1}}(x_{i-k:i}).\label{eqn:anewa4}\end{align}
Due to how this pmf is defined, we have the Markov chain 
\begin{align}
X_i \rightarrow X_{i-1:i-k}\rightarrow X_{i-k-1}, X_{i-k-2} \cdots\label{eqn:anewa1}\end{align}
Next, for this joint pmf we would like to compute
\begin{align*}I(X_{1:k+r}; Y_{k+1:k+r}).\end{align*}
This would be a lower bound on $C_r$. The key to bounding this expression from below is the following observation: since the channel is memory limited we have $p(y_i|x_i, x_{i-1}, \cdots)=p(y_i|x_{i:i-k})$ and hence
\begin{align}0&=I(Y_i;X_{i-k-1}X_{i-k-2}\cdots |X_{i:i-k})\nonumber
\\&= I(X_iY_i;X_{i-k-1}X_{i-k-2}\cdots |X_{i-k:i-1}) - I(X_i;X_{i-k-1}X_{i-k-2}\cdots |X_{i-k:i-1})\label{eq:chainrule2}
\\&= I(X_iY_i;X_{i-k-1}X_{i-k-2}\cdots |X_{i-k:i-1}) \label{anewa2}
\\&\geq I(Y_i;X_{i-k-1}X_{i-k-2}\cdots |X_{i-k:i-1}), \label{eqneqnanewa2}
\end{align}
where Eq. \eqref{eq:chainrule2} follows from the chain rule, Eq. \eqref{anewa2} follows from Eq. \eqref{eqn:anewa1}. Therefore from Eq. \eqref{eqneqnanewa2}
$$I(Y_i;X_{i-k-1}X_{i-k-2}\cdots |X_{i-k:i-1})=0.$$
Hence
\begin{align}H(Y_i|X_{1:i-1})=H(Y_i|X_{i-k:i-1}).\label{eqnLmnf}\end{align}
On the other hand, from Eq. \eqref{eqn:newnewthm1} we have that conditoned on the entire input sequence, outputs at times $k+1, k+2, \cdots$ are mutually independent;  hence the following equation holds for any $k+1\leq i \leq k+r$
\begin{align*}H(Y_i|X_{i-1}X_{i-2}\cdots)&=H(Y_i|X_{i-1}X_{i-2}\cdots, X_1, Y_{i-1}, Y_{i-2}, \cdots, Y_{k+1})
\\&\leq H(Y_i|Y_{k+1:i-1}),\end{align*}
where the last equation follows from the fact that removing variables from the conditioning increases entropy. 
Therefore, \begin{align}H(Y_{k+1:k+r})=\sum_{i=k+1}^{k+r}H(Y_i|Y_{k+1:i-1})\geq \sum_{i=k+1}^{k+r}H(Y_i|X_{i-1}X_{i-2}\cdots).
\label{eqn:kjhhg}
\end{align}
Hence,
\begin{align}C_r&\geq \frac 1r I(X_{1:k+r}; Y_{k+1:k+r})\nonumber
\\&=
\frac{1}{r}\bigg[ H(Y_{k+1:k+r})-H(Y_{k+1:k+r}|X_{1:k+r})\bigg]\nonumber
\\&\geq \frac{1}{r}\bigg[\sum_{i=k+1}^{k+r}H(Y_i|X_{i-1}X_{i-2}\cdots)-H(Y_{k+1:k+r}|X_{1:k+r})\bigg]\label{eqn:mnbvc}
\\&= \frac{1}{r}\bigg[\sum_{i=k+1}^{k+r}H(Y_i|X_{i-1}X_{i-2}\cdots)-\sum_{i=k+1}^{k+r}H(Y_{i}|X_{1:k+r}Y_{k+1:i})\bigg]\label{eqn:mnbvc5}
\\&\geq \frac{1}{r}\bigg[\sum_{i=k+1}^{k+r}H(Y_i|X_{i-1}X_{i-2}\cdots)-\sum_{i=k+1}^{k+r}H(Y_{i}|X_{i-k:i})\bigg]\label{eqn:mnbvc2}
\\&=\frac{1}{r}\bigg[\sum_{i=k+1}^{k+r}H(Y_i|X_{i-k:i-1})-\sum_{i=k+1}^{k+r}H(Y_{i}|X_{i-k:i})\bigg]\label{eqn:mnbvc6}
\\&= \frac{1}{r}\sum_{i=k+1}^{k+r}I(X_i;Y_i|X_{i-k:i-1})\nonumber
\\&=I(X_{k+1};Y_{k+1}|X_{k}X_{k-1}\cdots X_1)\label{eqn:mnbvc4}
.\end{align}

where \eqref{eqn:mnbvc} follows from \eqref{eqn:kjhhg}, Eq. \eqref{eqn:mnbvc5} follows from chain rule, \eqref{eqn:mnbvc2} follows from the fact that removing variables from conditioning part can only increase entropy, \eqref{eqn:mnbvc6} follows from \eqref{eqnLmnf}, and Eq. \eqref{eqn:mnbvc4} follows from Eq. \eqref{eqn:anewa4}.

\textbf{
Proof of the upper bound:} Here we start from the upper bound $C_r$ and show that the given rate will belong to it as $r$ tends to infinity. 
Note that for any arbitrary $p(x_{1:k+r})$ we have
\begin{align}\frac 1r I(X_{1:k+r}; Y_{k+1:k+r})
&=
\frac{1}{r}\bigg[ H(Y_{k+1:k+r})-H(Y_{k+1:k+r}|X_{1:k+r})\bigg]\nonumber
\\&\leq \frac{1}{r}\bigg[\sum_{i=k+1}^{k+r}H(Y_i)-H(Y_{k+1:k+r}|X_{1:k+r})\bigg]\nonumber
\\&= \frac{1}{r}\bigg[\sum_{i=k+1}^{k+r}H(Y_i)-\sum_{i=k+1}^{k+r}H(Y_{i}|X_{i-k:i})\bigg]\label{eqn:abcdsfe2}
\\&= \frac{1}{r}\sum_{i=k+1}^{k+r}I(X_{i-k:i};Y_i)\nonumber
\\&=I(X_{Q-k:Q};Y_{Q}|Q)\nonumber
\\&\leq I(QX_{Q-k:Q};Y_{Q})\nonumber
\\&= I(X_{Q-k:Q};Y_{Q})\label{eqn:lasteqnabz}
.\end{align}
where Eq. \eqref{eqn:abcdsfe2} follows from Eq. \eqref{eqn:newnewthm1},
$Q$ is a standard time sharing variable, uniform time-sharing rv over $[k+1:k+r]$ independent of all $X_i$ and $Y_i$'s, 
and Eq. \eqref{eqn:lasteqnabz} follows from the fact that the pmf $p(Y_{q}|X_{q-k:q})$ is the same for all $q$.

We now show that the pmf of $(X_{Q-k},X_{Q-k+1},\cdots, X_{Q})$  becomes very close to one in the set $\mathcal{P}$ as $r$ goes to infinity. For arbitrary values of $x_1, \cdots, x_k$, we should consider the difference
$$\big|p(X_{Q-k}=x_1, \cdots, X_{Q-1}=x_k)-p(X_{Q-k+1}=x_1, \cdots, X_{Q}=x_k)\big|.$$ 
We have
\begin{align*}
p(X_{Q-k}=x_1, \cdots, X_{Q-1}=x_k)&=\frac{1}{r}\sum_{q=k+1}^{k+r}p(X_{q-k}=x_1, \cdots, X_{q-1}=x_k),\\
p(X_{Q-k+1}=x_1, \cdots, X_{Q}=x_k)&=\frac{1}{r}\sum_{q=k+1}^{k+r}p(X_{q-k+1}=x_1, \cdots, X_{q}=x_k).
\end{align*}
Hence, when we subtract the two, all of the terms cancel out except for two terms:
\begin{align*}\big|p&(X_{Q-k}=x_1, \cdots, X_{Q-1}=x_k)-p(X_{Q-k+1}=x_1, \cdots, X_{Q}=x_k)\big|\\&=
\frac{1}{r}\big|p(X_{1}=x_1, \cdots, X_{k}=x_k)-p(X_{r+1}=x_1, \cdots, X_{r+k}=x_k)\big|
\\&\leq \frac{1}{r}.
\end{align*}
Taking $r$ to infinity, and using the continuity of mutual information with respect to input distribution, we get the desired result. 

This completes the proof.

\section{Proof of Example \ref{example1}}
\label{AppendixProofExample1}

 Note that $$\log p(Y|X)=\log\frac{1}{\sigma \sqrt{2 \pi}} e^{\frac{-(Y-X-\mu)^2}{2\sigma ^ 2}} =\log\frac{1}{\sigma \sqrt{2 \pi}} + \frac{-(Y-X-\mu)^2}{2\sigma ^ 2},$$
and  $$\mathbb{E}_{p(x,y)}(\frac{1}{\sigma \sqrt{2 \pi}})=\mathbb{E}_{p(x)p(y)}(\frac{1}{\sigma \sqrt{2 \pi}}).$$
 For any arbitrary functions $f(X)$ and $g(Y)$ we have
 $$\mathbb{E}_{p(x,y)}f(Y)=\mathbb{E}_{p(x)p(y)}f(Y),$$
  $$\mathbb{E}_{p(x,y)}g(X)=\mathbb{E}_{p(x)p(y)}g(X).$$
Hence we get that
\begin{align*}A(p(x,y))&=\mathbb{E}_{p(x,y)}\frac{XY}{\sigma^2} - \mathbb{E}_{p(x)p(y)}\frac{XY}{\sigma^2}
\\&=\mathbb{E}_{p(x,y)}\frac{XY}{\sigma^2} - \frac{\mathbb{E}[Y]\mathbb{E}[X]}{\sigma^2}
\\&=\frac {\mathbb{E}_{p(x)}[(X+\mu)X] - \mathbb{E}[X+\mu]\mathbb{E}[X]}{\sigma^2}
\\&=\frac {\mathbb{E}_{p(x)}[X^2] - \mathbb{E}^2[X]}{\sigma^2}
\\&=\frac{\mathsf{Var}(X)}{\sigma^2}.
\end{align*}
\color{black}

\section{Proof of Example \ref{example2}}
\label{AppendixProofExample2}

 Note that $$\log p(Y|X)=\log\frac{e^{-X-\lambda_0}(X+\lambda_0)^Y}{Y!}=-(X+\lambda_0)+Y\log(X+\lambda_0)-\log(Y!).$$
Since,  $$\mathbb{E}_{p(x,y)}(X+\lambda_0)=\mathbb{E}_{p(x)p(y)}(X+\lambda_0),$$
 $$\mathbb{E}_{p(x,y)}\log(Y!)=\mathbb{E}_{p(x)p(y)}\log(Y!).$$
we get that
\begin{align*}A(p(x,y))&=\mathbb{E}_{p(x,y)}Y\log(X+\lambda_0) - \mathbb{E}_{p(x)p(y)}Y\log(X+\lambda_0)
\\&=\mathbb{E}_{p(x,y)}Y\log(X+\lambda_0) - \mathbb{E}[Y]\mathbb{E}[\log(X+\lambda_0)]
\\&=\mathbb{E}_{p(x)}(X+\lambda_0)\log(X+\lambda_0) - \mathbb{E}[X+\lambda_0]\mathbb{E}[\log(X+\lambda_0)]
\\&=\mathsf{Cov}(X+\lambda_0, \log(X+\lambda_0)).\end{align*}
A further observation is that 
$$\max_{p(x)}\mathsf{Cov}(X+\lambda_0, \log(X+\lambda_0)),$$
always occurs when $X$ is a binary random variable, whereas $\max_{p(x)}I(X;Y)$ is a non-trivial maximization over input density functions $p(x)$ with some constraints on average and maximum of $X$. To see that a binary $X$ maximizes $\mathsf{Cov}(X+\lambda_0, \log(X+\lambda_0))$, observe that this expression is equal to $\mathbb{E}[g_1(X)]$ where
$$g_1(x)=(x+\lambda_0)\log(x+\lambda_0) - (\mu+\lambda_0)(\log(x+\lambda_0)),$$
and $\mu=\mathbb{E}[g_2(X)]$ where $g_2(x)=x$. Using the Convex Cover Method of \cite[Appendix C]{ElGamalKim} with functions $g_1(x)$ and $g_2(x)$, we get that a binary $X$ suffices. \color{black} So, the solution of the problem is a discrete binary distribution. we consider these two points, $x_1, x_2$ with probabilities $p_1, p_2$. We have:
\begin{align*}
&\max_{p(x)}\mathsf{Cov}(X+\lambda_0, \log(X+\lambda_0))\\
&=\max_{\substack{p(x),\\E[X]\leq \alpha,0 \leq X \leq \mathsf{A}}}E[(X+\lambda)\log(X+\lambda)]-E[X+\lambda]E[\log(X+\lambda)]\\
&=\max_{\substack{p(x),\\E[X]\leq \alpha,0 \leq X \leq \mathsf{A}}}E[X\log(X+\lambda)]-E[X]E[\log(X+\lambda)]\color{black}\\
&=\max_{\substack{p(x),\\\E[X]=\alpha'\leq \alpha,0 \leq X \leq \mathsf{A}}}E[X\log(X+\lambda)]-\alpha'E[\log(X+\lambda)]\color{black}\\
&=\max_{\substack{p(x),\\E[X]=\alpha'\leq \alpha,0 \leq X \leq \mathsf{A}}}E[(X-\alpha')\log(X+\lambda)]\\
&=\max_{\substack{x_1p_1+x_2p_2=\alpha'\leq \alpha,\\ p_1+p_2=1,\\ 0\leq p_1,p_2,\\ 0\leq x_1,x_2 \leq \mathsf{A}}}p_1(x_1-\alpha')\log(x_1+\lambda)+p_2(x_2-\alpha')\log(x_2+\lambda)\\
&=\max_{\substack{0\leq p_1 \leq 1, 0\leq x_1 \leq \mathsf{A},\\0\leq \frac{\alpha'-p_1 x_1}{1-p_1}\leq \mathsf{A}}}p_1(x_1-\alpha')\log(\frac{x_1+\lambda}{\frac{\alpha'-p_1 x_1}{1-p_1}+\lambda}).
\end{align*}
Considering the constraints, $x_1$ constant and $0 \leq p_1 \leq \min(\frac{\alpha'-\mathsf{A}}{x_1-\mathsf{A}},\frac{\alpha'}{x_1},1)$, then if we increase $p_1$, the function $p_1(x_1-\alpha')\log(\frac{x_1+\lambda}{\frac{\alpha'-p_1 x_1}{1-p_1}+\lambda})$ also increases.  At $p_1=\min(\frac{\alpha'-\mathsf{A}}{x_1-\mathsf{A}},\frac{\alpha'}{x_1},1)$, the function is maximized.\\
We consider two cases based on the constraints:
\begin{itemize}
\item $\alpha' \leq x_1 \leq \min(\frac{\alpha'}{p_1},\mathsf{A})$ and $p_1$ is constant: The function $p_1(x_1-\alpha')\log(\frac{x_1+\lambda}{\frac{\alpha'-p_1 x_1}{1-p_1}+\lambda})$ is increasing when we increase $x_1$ for $x_1 \geq \alpha'$. So for $\alpha' \leq x_1 \leq \min(\frac{\alpha'}{p_1},\mathsf{A})$ and constant $p_1$, we have the maximum at $x_1=\min(\frac{\alpha'}{p_1},\mathsf{A})$. The maximum of the capacity  optimization is achieved at $x_1=\min(\frac{\alpha'}{p_1}, \mathsf{A})$ and $p_1=\min(\frac{\alpha'-\mathsf{A}}{x_1-\mathsf{A}},\frac{\alpha'}{x_1},1)$. So if $x_1=\mathsf{A}$ then $p_1=\frac{\alpha'}{\mathsf{A}}$ and if $x_1=\frac{\alpha'}{p_1}$ then $p_1=\frac{\alpha'-\mathsf{A}}{x_1-\mathsf{A}}$ which results in $x_1=\alpha'$ and $p_1=1$  which contradicts the two point distribution. So $x_1=\mathsf{A}$ and $x_2=0$ and $p_1=\frac{\alpha'}{\mathsf{A}}, p_2=1-\frac{\alpha'}{\mathsf{A}}$.
\item $\max(0 , \frac{\alpha'-\mathsf{A}}{p_1}+\mathsf{A})\leq x_1 \leq \alpha'$ and $p_1$ is constant: The function $p_1(x_1-\alpha')\log(\frac{x_1+\lambda}{\frac{\alpha'-p_1 x_1}{1-p_1}+\lambda})$ is increasing when we decrease $x_1$ for $\max(0 , \frac{\alpha'-\mathsf{A}}{p_1}+\mathsf{A})\leq x_1 \leq \alpha'$. We have the maximum at $x_1=\max(0 , \frac{\alpha'-\mathsf{A}}{p_1}+\mathsf{A})$. The maximum of the capacity optimization is achieved at $x_1=\max(0 , \frac{\alpha'-\mathsf{A}}{p_1}+\mathsf{A})$ and $p_1=\min(\frac{\alpha'-\mathsf{A}}{x_1-\mathsf{A}},\frac{\alpha'}{x_1},1)$. Now if $x_1=0$ then $p_1=1-\frac{\alpha'}{\mathsf{A}}$ and if $x_1=\frac{\alpha'-\mathsf{A}}{p_1}+\mathsf{A}$ then $p_1=\frac{\alpha'-\mathsf{A}}{x_1-\mathsf{A}}$ which results in $x_1=\alpha'$ and $p_1=1$ which also results in a contradiction of the needed binary valued distribution.
\end{itemize}

Thus, the optimal distribution is $p(x)=\frac{\alpha'}{\mathsf{A}}\delta(x-\mathsf{A})+(1-\frac{\alpha'}{\mathsf{A}}) \delta(x)$ and the upper bound is $$\max_{\alpha'\leq \alpha}\frac{\alpha'}{\mathsf{A}}(\mathsf{A}-\alpha')\log(\frac{\mathsf{A}}{\lambda_0}+1),$$
which is equal to $\frac{\alpha}{\mathsf{A}}(\mathsf{A}-\alpha)\log(\frac{\mathsf{A}}{\lambda_0}+1)$ if $\alpha\leq \mathsf{A}/2$, and 
$\frac{\mathsf{A}}{4}\log(\frac{\mathsf{A}}{\lambda_0}+1)$ otherwise.

\section{Properties of the Upper Bound (Section \ref{section:someproperties})}
\label{AppendixPropertiesProof}
\emph{Proof of Property 2:} This follows from algebra:
\begin{align*}&A\big(p(y_1|x_1)p(y_2|x_2)\big)\\&=\max_{p(x_1, x_2)}\mathbb{E}_{p(x_1x_2y_1y_2)}\log p(Y_1Y_2|X_1X_2) - \mathbb{E}_{p(x_1x_2)p(y_1y_2)}\log p(Y_1Y_2|X_1X_2) 
\\&=\max_{p(x_1, x_2)}\mathbb{E}_{p(x_1x_2y_1y_2)}\log p(Y_1|X_1)p(Y_2|X_2) - \mathbb{E}_{p(x_1x_2)p(y_1y_2)}\log p(Y_1|X_1)p(Y_2|X_2) 
\\&=\max_{p(x_1)}\mathbb{E}_{p(x_1y_1)}\log p(Y_1|X_1) - \mathbb{E}_{p(x_1)p(y_1)}\log p(Y_1|X_1)
\\&\qquad+\max_{p(x_2)}\mathbb{E}_{p(x_2y_2)}\log p(Y_2|X_2) - \mathbb{E}_{p(x_2)p(y_2)}\log p(Y_2|X_2).
\end{align*}

\emph{Proof of Property 3:} Using Jensen's inequality as follows, we have
\begin{align*}H(Y)&=-\sum_{y}p(y)\log(p(y))
\\&= -\sum_{y}p(y)\log(\sum_xp(x)p(y|x))
\\&\leq -\sum_{y}p(y)\sum_xp(x)\log(p(y|x))
\\&=-\sum_{x,y}p(x)p(y)\log(p(y|x)).
\end{align*}
Thus, \begin{align*}I(X;Y)&=H(Y)-H(Y|X)\\&\leq -\sum_{x,y}p(x)p(y)\log(p(y|x)) - H(Y|X)
\\&= -\sum_{x,y}p(x)p(y)\log(p(y|x)) + \sum_{x,y} p(x,y)\log p(y|x)\color{black}
\\&=\sum_{x,y}\big[p(x,y)-p(x)p(y)\big]\log(p(y|x)).\end{align*}

\section{Proof of Theorem \ref{prop2}}
\label{AppendixProofClaim2}

A code of length $n$ consists of a set of codewords $X^n(m), m\in [1:2^{nR}]$. The codeword $X^n(m)$ corresponds to a sequence $S^n(m)$ after passing through the LTI system. The sequences $S^n(m), m\in [1:2^{nR}]$ can then be thought of as codewords for a memoryless Poisson channel. The sequence $S^n(m)$ are the outputs of the LTI system and satisfy some linear constraints. Let 
\begin{align*}\mathcal{S}_n(\mathsf{A}, \alpha, \textbf{p})=\{\textbf{s}=(s_1, s_2, \cdots, s_n): \exists \textbf{x}=(x_1, x_2, \cdots, x_n):~ & \textbf{s}=(\textbf{x}{\star} \textbf{p})_{n}, \mathbf{x}\in\mathcal{P}_n(\mathsf{A}, \alpha)\},
\end{align*}
where 
$$\mathcal{P}_n(\mathsf{A}, \alpha)=\{(x_1, x_2, \cdots,x_n): x_i\geq 0, ~\frac{1}{n}\sum_{i=1}^nx_i\leq \alpha,~ x_i\leq \mathsf{A}\},$$
and by $(\textbf{x}\star‎ \textbf{p})_n$ we mean truncated convolution, i.e. that the first $n$ elements  of the convolution is taken (the convolution can have more terms).

Therefore, achieving the capacity $\mathcal{C}(\mathsf{A}, \alpha, \textbf{p}, \lambda_0)$ is equivalent to choosing the best possible codewords from the set  $\mathcal{S}_n(\mathsf{A}, \alpha, \textbf{p})$. 
It suffices to show that for each $n$, 
 $\mathcal{S}_n(\mathsf{A}, \alpha, \textbf{p}')\subset \mathcal{S}_n(\mathsf{A}, \alpha, \textbf{p})$. This implies that there is more freedom to choose the codewords in the problem with $\textbf{p}$ than in the problem with $\textbf{p}'$.  Select an arbitrary  $\mathbf{s}\in \mathcal{S}_n(\mathsf{A}, \alpha, \textbf{p}')$. We would like to show that $\mathbf{s}\in \mathcal{S}_n(\mathsf{A}, \alpha, \textbf{p})$. Since $\mathbf{s}\in \mathcal{S}_n(\mathsf{A}, \alpha, \textbf{p}')$, we have,
$$\textbf{s}=(\textbf{x}{\star} \textbf{p}')_n=(\textbf{x}{\star}  (\textbf{p}{\star} \textbf{q}))_n=((\textbf{x}{\star} \textbf{q}){\star} \textbf{p})_n=((\textbf{x}{\star} \textbf{q})_n{\star} \textbf{p})_n,$$
for some $\textbf{x}\in \mathcal{P}_n(\mathsf{A}, \alpha)$. It suffices to show that $\textbf{r}:=(\textbf{x}{\star} \textbf{q})_n$ is in $\mathcal{P}_n(\mathsf{A}, \alpha)$ to conclude that $\mathbf{s}\in \mathcal{S}_n(\mathsf{A}, \alpha, \textbf{p})$.We have
\begin{align*}\frac{1}{n}\sum_{i=1}^nr_i&=\frac{1}{n}\sum_{i=1}^n\sum_{j=0}^ix_jq_{i-j}
\\&=\frac{1}{n}\sum_{i=1}^nx_i\sum_{j=0}^{n-i}q_{j}
\\&\leq \frac{1}{n}\sum_{i=1}^nx_i(\sum_{i=0}^{n-1}q_i)
\\&\leq \alpha(\sum_{i=0}^{n-1}q_i) 
\\&\leq \alpha,
\end{align*}
where in the last step we use the fact that $\sum_{i=0}^{\infty}q_i=1$ (this is because $\textbf{p}'=\textbf{p}{\star} \textbf{q}$ and both $\textbf{p}$ and $\textbf{p}'$ have elements that sum to one). Therefore, $\sum_{i=1}^nr_i/N\leq \alpha$. On the other hand,
\begin{align*}r_i&=\sum_{j=0}^ix_jq_{i-j}
\leq \mathsf{A}\sum_{j=0}^iq_{i-j}
\leq \mathsf{A}(\sum_{i=0}^{n-1}q_i)
\leq \mathsf{A}.
\end{align*}
Therefore $\textbf{r}=(\textbf{x}{\star} \textbf{q})_n\in\mathcal{P}_n(\mathsf{A}, \alpha)$. This completes the proof.
\color{black}


\end{document}